\newcommand{\by}{\mathbf{y}}
\newcommand{\bYe}{\mathbf{Y}_{\text{eff}}}
\newcommand{\bY}{\mathbf{Y}}
\newcommand{\bx}{\mathbf{x}}
\newcommand{\bX}{\mathbf{X}}
\newcommand{\bZe}{\mathbf{Z}_{\text{eff}}}
\newcommand{\bZ}{\mathbf{Z}}
\newcommand{\bH}{\mathbf{H}}
\newcommand{\ba}{\mathbf{a}}
\newcommand{\bc}{\mathbf{c}}
\newcommand{\bC}{\mathbf{C}}
\newcommand{\bb}{\mathbf{b}}
\newcommand{\bU}{\mathbf{U}}
\newcommand{\bu}{\mathbf{u}}
\newcommand{\CV}{\mathcal{V}}
\newcommand{\bw}{\mathbf{w}}
\newcommand{\bt}{\mathbf{t}}
\newcommand{\bN}{\mathbf{N}}
\newcommand{\bn}{\mathbf{n}}
\newcommand{\bG}{\mathbf{G}}
\newcommand{\bs}{\mathbf{s}}
\newcommand{\bF}{\mathbf{F}}
\newcommand{\ZZ}{\mathbb{Z}}
\newcommand{\RR}{\mathbb{R}}
\newcommand{\Tsnr}{\mathsf{SNR}}
\newcommand{\Ql}{Q_{\Lambda}}
\newcommand{\Mod}{\bmod\Lambda}
\newcommand{\Var}{\mathrm{Var}}
\newcommand{\Cov}{\mathrm{Cov}}
\newcommand{\rank}{\mathop{\mathrm{rank}}}
\newcommand{\cube}{\mathop{\mathrm{CUBE}}}
\newcommand{\Unif}{\mathop{\mathrm{Unif}}}
\newcommand{\Vol}{\mathrm{Vol}}
\newcommand{\m}[1]{\mathcal{#1}}
\newcommand{\Ind}{\mathds{1}}
\newtheorem{theorem}{Theorem}
\newenvironment{proof}[1][Proof]{\noindent\textbf{#1.} }{\ \rule{0.5em}{0.5em}}
\newtheorem{corollary}{Corollary}
\newtheorem{remark}{Remark}
\newtheorem{definition}{Definition}
\newtheorem{lemma}{Lemma}
\def\darkgreen{green!55!black}
\begin{document}

\title{A Simple Proof for the Existence of ``Good'' Pairs of Nested Lattices}

\author{Or~Ordentlich and
        Uri~Erez,~\IEEEmembership{Member,~IEEE}
\thanks{The work of O. Ordentlich was supported by the Adams Fellowship Program of the Israel Academy of Sciences and Humanities, a fellowship from The Yitzhak and Chaya Weinstein Research Institute for Signal Processing at Tel Aviv University and the Feder Family Award. The work of U. Erez was supported by by the ISF under Grant 1956/15.}
\thanks{O. Ordentlich and U. Erez are with Tel Aviv University, Tel Aviv, Israel (email: ordent,uri@eng.tau.ac.il).
}
\thanks{The material in this paper was presented in part at the 27th IEEE Convention of Electrical and Electronics Engineers in Israel, Eilat, 2012.}
}
\maketitle

\begin{abstract}
This paper provides a simplified proof for the existence of nested lattice codebooks allowing to achieve the capacity of the additive white Gaussian noise channel, as well as the optimal rate-distortion trade-off for a Gaussian source. The proof is self-contained and relies only on basic probabilistic and geometrical arguments. An ensemble of nested lattices that is different, and more elementary, than the one used in previous proofs is introduced. This ensemble is based on lifting different subcodes of a linear code to the Euclidean space using Construction A. In addition to being simpler, our analysis is less sensitive to the assumption that the additive noise is Gaussian. In particular, for additive ergodic noise channels it is shown that the achievable rates of the nested lattice coding scheme depend on the noise distribution only via its power. Similarly, the nested lattice source coding scheme attains the same rate-distortion trade-off for all ergodic sources with the same second moment.
\end{abstract}

\section{Introduction}
\label{sec:Introduction}

While lattices are the Euclidean space counterpart of linear codes in Hamming space, the two fields historically developed along quite different paths. From the onset of coding theory, linear codes were treated both using algebraic tools as well as via probabilistic methods. While the history of the theory of lattices began much earlier, with the exception of the Minkowski-Hlawka theorem, its development leaned heavily on purely algebraic constructions until quite recently. This has led to a rather convoluted path for arriving at basic proofs for the existence of lattices possessing ``goodness'' properties that are central to communication problems. The goal of this work is to provide a simple proof for the existence of lattices with the minimal ``goodness'' requirements necessary for achieving the capacity of the AWGN channel, as well as the optimal rate-distortion tradeoff for a white Gaussian source.

A major difference between linear codes and lattices is that the former are finite, while the latter are unbounded. As a result, the application of linear codes to communication settings is more straightforward. The application of lattices for communication problems requires intersecting the (infinite) lattice with a finite shaping region, in order to construct a codebook.

For the problem of source coding, it has been recognized early on~\cite{zador} that the significance of the shaping region becomes less crucial as the quantization resolution grows. Indeed, high resolution is the natural operating point in practical systems, and thus neglecting the shaping region and studying the quantization performance of the lattice is sufficient. Namely, the performance of a lattice quantizer at high resolution, is dictated by its normalized second moment. The asymptotic optimality of lattice quantizers in the latter sense, was established in~\cite{zf96}, where the existence of sequences of lattices whose normalized second moment approaches that of ball, was established. Such sequences of lattices are called \emph{good for MSE quantization}. A stronger requirement is that the worst-case squared error distortion attained by a sequence of lattices approaches its average. Sequences of lattices that satisfy this property are called \emph{good for covering} and were shown to exist by Rogers~\cite{rogers59}.

When it comes to channel coding, the equivalent of the high resolution regime is that of high transmission rate. However, communication systems supporting a very large number of information bits per dimension are seldom encountered. As a consequence, it was not until the 1970s that lattice codes were considered for the channel coding problem, starting with the works of Blake~\cite{blake} and de Buda~\cite{deBuda75}, and continuing with~\cite{urbanke,linder,loeliger97}. In these works, the shaping region was naturally taken to be a ball (or a thin spherical shell), which is efficient in terms of power, but results in a codebook with weaker symmetry than the original lattice. Poltyrev~\cite{poltyrev} bypassed this obstacle, by adopting a path analogous to high resolution quantization, and studied the performance of lattices for the unrestricted additive white Gaussian noise (AWGN) channel. In particular, Poltyrev established the existence of sequences of lattices for which the probability of erroneous detection approaches that of AWGN leaving an effective ball whose volume matches the density of the lattice. Such sequences of lattices are called \emph{Poltyrev good}. As a corollary, it follows that there exist sequences of lattices for which the probability of erroneous detection approaches zero as long as the variance of the AWGN is no greater than the squared radius of the effective ball. Such lattices are called \emph{good for channel coding}. We refer the reader to~\cite[Chapter 7]{ramibook} for a more comprehensive definition and treatment of asymptotic goodness properties of lattices.

An alternative approach~\cite{conwaySloane83,Forney89} to using a spherical shaping region, is using a \emph{nested lattice pair} $\Lambda_c\subset\Lambda_f$, where the Voronoi region $\CV_c$ of the lattice $\Lambda_c$ is used for shaping, such that the codebook is $\m{L}=\Lambda_f\cap\CV_c$. This approach has the advantage of retaining the lattice symmetry structure. In particular, it was shown~\cite{ftc00} that there exist sequences of such codebooks that can attain any rate below $\tfrac{1}{2}\log(\Tsnr)$ with \emph{lattice decoding}, i.e., nearest neighbor decoding over the infinite lattice $\Lambda_f$. See also~\cite{ramibook}.

Finally,~\cite{ErezZamirAWGN} introduced a coding scheme using nested lattice pairs in conjunction with MMSE estimation and dithering. This scheme was shown to attain capacity, as well as the Poltyrev error exponent, with lattice decoding. It is worthwhile noting, that the proof hinged on the coarse lattice being good for covering, and the fine lattice being Poltyrev good. A similar MMSE estimation approach for the source coding problem, was shown to achieve the rate-distortion function of a Gaussian source~\cite{nestedLattices}.

The nested lattice coding scheme of~\cite{ErezZamirAWGN}, which is described in detail in Section~\ref{sec:main}, transformed the AWGN to a modulo-additive channel, where the additive noise is a linear \emph{mixture} of AWGN and a dither uniformly distributed over the Voronoi region of the coarse lattice. In order to establish that this scheme achieves the capacity of the AWGN channel, the authors first derived its error exponent, and then obtained the capacity result as a corollary. Their error exponent analysis required showing that the probability density function of the mixture noise is upper bounded by that of AWGN with the same second moment, times some term that becomes insignificant as the dimension increases. This in turn, imposed the requirement that the coarse lattice be good for covering. On the other hand, the interest in error exponents led to the requirement that the fine lattice be Poltyrev good.

Consequently, the proof of the error exponent and capacity results in~\cite{ErezZamirAWGN} required showing the existence of a sequence of nested lattice pairs where the fine lattice is Poltyrev good, and the coarse lattice is Rogers good. To this end, an ensemble of random Construction A lattices, rotated by the generating matrix of a lattice good for covering, was defined and analyzed. The proof therefore relied on the existence of lattices that are good for covering, which made it indirect, complicated, and overly stringent.

In the last decade lattice codes were found to play a new role in network information theory allowing to obtain new achievable rate regions, that are not achievable using the best known random coding schemes, for many problems~\cite{Philosof,compAndForIeee,ncl10,Bresler,CoFTransformFull,ncnc15}. See~\cite{nz14} for a comprehensive survey. The scheme of~\cite{ErezZamirAWGN}, or its variations, plays an important role in many of these new techniques. However, since the capacity region is not known for the majority of problems in network information theory, determining the optimal error exponents is far out of scope. Therefore, it is the capacity result from~\cite{ErezZamirAWGN}, rather than the error exponent one, that is often used in this context.

This paper relaxes the goodness properties required by a nested lattice pair in order to be capacity achieving. Namely, we show that a pair of nested lattices where the fine lattice is good for coding and the coarse lattice good for MSE quantization, suffices to achieve the capacity of the AWGN channel under the scheme from~\cite{ErezZamirAWGN}. In fact we prove a more general result, showing that the scheme from~\cite{ErezZamirAWGN} applied with such nested lattice pairs can reliable achieve any rate smaller than $\tfrac{1}{2}\log(1+\Tsnr)$ over all additive \emph{semi norm-ergodic} noise channels. An analogous result holds for quantization.

The class of semi norm-ergodic processes includes all processes whose empirical variance is almost surely not much greater than the variance. In~\cite{LapidothNNdecoding} Lapidoth showed that i.i.d. Gaussian codebooks with nearest neighbor decoding can achieve any rate smaller than $\tfrac{1}{2}\log(1+\Tsnr)$ over the same class of channels. Our result is therefore the lattice codes analogue of~\cite{LapidothNNdecoding}. Moreover, it immediately implies that many nested lattice based coding schemes for Gaussian networks are in fact robust to the exact statistics of the noise, and merely require it to be semi norm-ergodic.

A key result we obtain, is that a dither uniformly distributed over the Voronoi region of a lattice that is good for MSE quantization is semi norm-ergodic, and moreover, any linear combination of such a dither and semi norm-ergodic noise, is itself semi norm-ergodic.
This enables to relax the goodness for covering requirement of the coarse lattice, to goodness for MSE quantization.

Our analysis also naturally extends to the more practical case, where the coarse lattice is the simple one-dimensional cubic lattice, whereas the fine lattice is a Construction A lattice based on some $p$-ary linear code. We show that for large $p$, the scheme from~\cite{ErezZamirAWGN} can reliably achieve any rate smaller than $\tfrac{1}{2}(1+\Tsnr)-\tfrac{1}{2}\log(2\pi e/12)$ with such a coarse lattice. We further explicitly upper bound the loss incurred by using any finite value of $p$.

Most importantly, we provide a simple, self-contained proof for the existence of nested lattice chains $\Lambda_1^{(n)}\subset\cdots\subset\Lambda_L^{(n)}$, for any finite $L$, where all lattice sequences $\Lambda_1^{(n)},\cdots,\Lambda_L^{(n)}$ are good for MSE quantization and for coding. Although this result is not new, and can be obtained as a simple corollary of~\cite{kp07}, our proof techniques are quite different and considerably simpler.
In particular, we define a novel ensemble of nested lattice chains, based on drawing a random linear $p$-ary code and using Construction A to lift $L$ of its sub-codes to the Euclidean space. This ensemble, which is a direct extension of the enseble of nested linear binary codes proposed by Zamir and Shamai in \cite{zs98}, allows for a direct analysis of the goodness figures of merit of its members. Consequently, our existence proof requires only elementary probabilistic and geometrical arguments.

\section{Preliminaries on Lattice Codes}
\label{sec:Preliminaries}
A lattice $\Lambda$ is a discrete subgroup of $\RR^n$ which is closed under reflection and real addition. Any lattice $\Lambda$ in $\RR^n$ is spanned by some $n\times n$ matrix $\bF$ such that
\begin{align}
\Lambda=\{\mathbf{t}=\bF\mathbf{a}:\mathbf{a}\in\ZZ^n\}.\nonumber
\end{align}
We denote the nearest neighbor quantizer associated with the lattice $\Lambda$ by
\begin{align}
\Ql(\bx)\triangleq\arg\min_{\mathbf{t}\in\Lambda}\|\bx-\mathbf{t}\|.
\label{NNquantizer}
\end{align}
The basic Voronoi region of $\Lambda$, denoted by $\CV$, is the set of all points in $\RR^n$ which are quantized to the zero vector, where ties in~\eqref{NNquantizer} are broken in a systematic manner.
The modulo operation returns the quantization error w.r.t. the lattice,
\begin{align}
\left[\bx\right]\Mod\triangleq\bx-\Ql(\bx),\nonumber
\end{align}
and satisfies the distributive law,
\begin{align}
\big[[\bx]\Mod+\by\big]\Mod=\left[\bx+\by\right]\Mod.\nonumber
\end{align}
Let $V(\Lambda)$ be the volume of a fundamental cell of $\Lambda$, i.e., the volume of $\CV$, and let $\bU$ be a random variable uniformly distributed over $\CV$.
We define the second moment per dimension associated with $\Lambda$ as
\begin{align}
\sigma^2(\Lambda)\triangleq\frac{1}{n}\mathbb{E}\|\bU\|^2=\frac{1}{n}\frac{\int_{\CV}\|\bx\|^2d\bx}{V(\Lambda)}.\nonumber
\end{align}
The normalized second moment (NSM) of a lattice $\Lambda$ is defined by
\begin{align}
G(\Lambda)\triangleq\frac{\sigma^2(\Lambda)}{V(\Lambda)^{\frac{2}{n}}}.\nonumber
\end{align}
Note that this quantity is invariant to scaling of the lattice $\Lambda$.

It is often useful to compare the properties of the Voronoi region $\CV$ with those of a ball.

\vspace{1mm}

\begin{definition}
Let
\begin{align}
\mathcal{B}(\bs,r)\triangleq\left\{\bx\in\RR^n \ : \ \|\bx-\bs\|\leq r\right\},\nonumber
\end{align}
denote the closed $n$-dimensional ball with radius $r$ centered at $\bs$.
We denote the volume of an $n$-dimensional ball with unit radius by $V_n$. In general $V\left(\mathcal{B}(\bs,r)\right)=V_n r^n$. Note that $n V_n^{\frac{2}{n}}$ is monotonically increasing in $n$, and satisfies \mbox{$4\leq n V_n^{\frac{2}{n}}<2\pi e$} for all $n$~\cite{ConwaySloane}, and
\begin{align}
\lim_{n\rightarrow\infty}n V_n^{\frac{2}{n}}={2\pi e}.\label{VnAsymptotic}
\end{align}
\end{definition}

\vspace{1mm}

By the isoperimetric inequality, the ball $\mathcal{B}(\mathbf{0},r)$ has the smallest second moment per dimension out of all (measurable) sets in $\RR^n$ with volume $V_n r^n$, and it is given by
\begin{align}
\sigma^2\left(\mathcal{B}(\mathbf{0},r) \right)&=\frac{1}{n}\frac{1}{V_n r^n}\int_{\bx\in\mathcal{B}(\mathbf{0},r)}\|\bx\|^2 d\bx\nonumber\\
&=\frac{1}{n}\frac{1}{V_n r^n}\int_0^{r}r'^2 d(V_n r'^n)\nonumber\\
&=\frac{1}{n}\frac{1}{V_n r^n} \frac{nV_n r^{n+2}}{n+2}\nonumber\\
&=\frac{r^2}{n+2}.\label{ballsecondmoment}
\end{align}
It follows that $\mathcal{B}(\mathbf{0},r)$ has the smallest possible NSM
\begin{align}
G\left(\mathcal{B}(\mathbf{0},r)\right)=\frac{\sigma^2\left(\mathcal{B}(\mathbf{0},r)\right)}{V^{\tfrac{2}{n}}\left(\mathcal{B}(\mathbf{0},r)\right)}
=\frac{1}{n+2}V_n^{-\frac{2}{n}},
\end{align}
which approaches $1/(2\pi e)$ from above as $n\rightarrow\infty$. Thus, the NSM of any lattice in any dimension satisfies \mbox{$G(\Lambda)\geq 1/(2\pi e)$}.

We define the effective radius $r_{\text{eff}}(\Lambda)$ as the radius of a ball which has the same volume as $\Lambda$, i.e.,
\begin{align}
r^2_{\text{eff}}(\Lambda)\triangleq\frac{V^{\frac{2}{n}}(\Lambda)}{V_n^{\frac{2}{n}}}.\label{reffdef}
\end{align}
Since $\mathcal{B}(\mathbf{0},r_{\text{eff}}(\Lambda))$ has the smallest second moment of all sets in $\RR^n$ with volume $V(\Lambda)$, we have
\begin{align}
\sigma^2(\Lambda)\geq\sigma^2\left(\mathcal{B}(\mathbf{0},r_{\text{eff}}(\Lambda))\right)=\frac{r^2_{\text{eff}}(\Lambda)}{n+2}\label{isoReff}.
\end{align}
Thus,
\begin{align}
r_{\text{eff}}(\Lambda)\leq\sqrt{(n+2)\sigma^2(\Lambda)}.\label{reffbound}
\end{align}
Note that for large $n$ we have
\begin{align}
\frac{r^2_{\text{eff}}(\Lambda)}{n}\approx\frac{V^{\frac{2}{n}}(\Lambda)}{2\pi e}.\nonumber
\end{align}

\vspace{1mm}

\begin{definition}
We say that a sequence in $n$ of random noise vectors $\bZ^{(n)}$ of length $n$ with (finite) effective variance \mbox{$\sigma^2_{\bZ}\triangleq\frac{1}{n}\mathbb{E}\|\bZ^{(n)}\|^2$}, is \emph{semi norm-ergodic} if for any $\epsilon,\delta>0$ and $n$ large enough
\begin{align}
\Pr\left(\bZ^{(n)}\notin\mathcal{B}(\mathbf{0},\sqrt{(1+\delta)n\sigma^2_{\bZ}} \right)\leq\epsilon.\label{normergodicDef}
\end{align}
Note that by the law of large numbers, any i.i.d. noise is semi norm-ergodic. However, even for non i.i.d. noise, the requirement~\eqref{normergodicDef} is not very restrictive.
In the sequel we omit the dimension index, and denote the sequence $\bZ^{(n)}$ simply by $\bZ$.
\end{definition}

\vspace{1mm}

\begin{definition}
The \emph{nearest neighbor decoder} with respect to the lattice $\Lambda$ outputs for every $\by\in\RR^n$ the lattice point $Q_{\Lambda}(\by)$.
\end{definition}

\vspace{2mm}

\begin{definition}
A sequence of lattices $\Lambda^{(n)}$ with growing dimension, satisfying
\begin{align}
\lim_{n \rightarrow \infty} V^{\frac{2}{n}}(\Lambda^{(n)})=\Phi\nonumber
\end{align}
for some $\Phi>0$, is called \emph{good for channel coding in the presence of semi norm-ergodic noise} if for any lattice point $\bt\in\Lambda^{(n)}$, and additive semi norm-ergodic noise $\bZ$ with effective variance\footnote{In~\cite{ErezZamirAWGN} the volume-to-noise ratio (VNR) was defined as $$\mu=\lim_{n\to\infty}V^{\frac{2}{n}}(\Lambda^{(n)})/{2\pi e \sigma^2_{\bZ}}.$$ Thus, the condition $\Phi>{2\pi e \sigma^2_{\bZ}}$ is equivalent to $\text{VNR}>1$.} $\sigma^2_{\bZ}=\frac{1}{n}\mathbb{E}\|\bZ\|^2<\Phi/2\pi e$
\begin{align}
\lim_{n\rightarrow\infty}\Pr\left(Q_{\Lambda^{(n)}}(\bt+\bZ)\neq \bt \right)= 0,\nonumber
\end{align}
That is, the error probability under nearest neighbor decoding in the presence of semi norm-ergodic additive noise $\bZ$ vanishes with $n$ if \mbox{$\lim_{n\to\infty}r^2_{\text{eff}}(\Lambda^{(n)})/n>\sigma^2_{\bZ}$}. For brevity, we simply call such sequences of lattices \emph{good for coding} in the sequel.
\end{definition}

\vspace{1mm}

\begin{definition}
A sequence of lattices $\Lambda^{(n)}$ with growing dimension is called good for mean squared error (MSE) quantization if
\begin{align}
\lim_{n\rightarrow\infty}G\left(\Lambda^{(n)}\right)=\frac{1}{2\pi e}.\nonumber
\end{align}
\end{definition}

\vspace{1mm}

A lattice $\Lambda_c$ is said to be nested in $\Lambda_f$ if $\Lambda_c\subset\Lambda_f$. The lattice $\Lambda_c$ is referred to as the coarse lattice and $\Lambda_f$ as the fine lattice. The \emph{nesting ratio} is defined as $\left(V(\Lambda_c)/V(\Lambda_f)\right)^{1/n}$.

\vspace{1mm}

Next, we define ``good'' pairs of nested lattices. Our definition for the ``goodness'' of nested lattice pairs is different from the one used in~\cite{ErezZamirAWGN}.

\begin{definition}
\label{def:goodpairs}
A sequence of pairs of nested lattices \mbox{$\Lambda^{(n)}_c\subset\Lambda^{(n)}_f$} is called ``good'' if the sequence of lattices $\Lambda^{(n)}_c$ and $\Lambda^{(n)}_f$ are good for both MSE quantization and for coding.
\end{definition}

\begin{remark}
As we shall see in Section~\ref{sec:main}, for the problem of coding over the AWGN channel (or more generally, any additive semi norm-ergodic noise channel), it suffices that $\Lambda_f^{(n)}$ is good for coding and $\Lambda_c^{(n)}$ is good for MSE quantization. In order to achieve the optimal rate-distortion function of a white Gaussian source, the roles are reversed and $\Lambda_f^{(n)}$ should be good for MSE quantization while $\Lambda_c^{(n)}$ is good for coding. A sequence of pairs \mbox{$\Lambda^{(n)}_c\subset\Lambda^{(n)}_f$} that is good according to Definition~\ref{def:goodpairs} is therefore adequate for both problems.
\end{remark}

Our existence proofs are based on Construction A~\cite{ConwaySloane}, as defined next.

\begin{definition}[$p$-ary Construction A]
Let $p$ be a prime number, and let $\bG\in\ZZ_p^{k\times n}$ be a $k\times n$ matrix whose entries are all members of the finite field $\ZZ_p$. The matrix $\bG$ generates a linear $p$-ary code
\begin{align}
\mathcal{C}(\bG)\triangleq\left\{\bx\in\ZZ_p^n \ : \ \bx=[\bw^T\bG]\bmod p \ \ \ \bw\in\ZZ_p^{k}  \right\}\nonumber.
\end{align}
The $p$-ary Construction A lattice induced by the matrix $\bG$ is defined as
\begin{align}
\Lambda(\bG)\triangleq p^{-1}\mathcal{C}(\bG)+\ZZ^n.\nonumber
\end{align}
\label{def:constA}
\end{definition}

\section{Main Results}
\label{sec:main}

Our main result is the following.

\vspace{1mm}

\begin{theorem}
For any finite $L$, $0<\alpha_1<\ldots<\alpha_L<\infty$ there exists a sequence of nested lattice  chains $\Lambda^{(n)}_1\subset\cdots\subset\Lambda^{(n)}_L$ for which
\begin{enumerate}
\item $\Lambda^{(n)}_\ell$ is good for MSE quantization and for coding for all $\ell=1,\ldots,L$;
\item $\lim_{n\to\infty}V^{\tfrac{2}{n}}\left(\Lambda^{(n)}_\ell\right)=2\pi e 2^{-\alpha_\ell}$ for all $\ell=1,\ldots,L$.
\end{enumerate}
\label{thm:main}
\end{theorem}

\vspace{1mm}

For the proof of Theorem~\ref{thm:main}, as given in Section~\ref{sec:ensemble}, we define a novel ensemble of nested lattice chains. This ensemble is defined in Section~\ref{sec:ensemble} and is based on drawing a random linear $p$-ary code and using Construction A to lift $L$ of its sub-codes to the Euclidean space. Theorem~\ref{thm:randomConstA}, stated in Section~\ref{sec:ensemble} and proved in Section~\ref{sec:proof}, shows that with high probability each of these lifted sub-codes possesses the goodness properties. The existence of a sequence of good nested lattice chains then follows from a simple union bound argument.

An immediate corollary of Theorem~\ref{thm:main} is the following.

\vspace{1mm}

\begin{theorem}
For any $P_1>P_2>\cdots>P_L>0$ there exists a sequence of nested lattice chains $\Lambda^{(n)}_1\subset\cdots\subset\Lambda^{(n)}_L$ with the following properties
\begin{enumerate}
\item $\Lambda^{(n)}_\ell$ is good for MSE quantization and for coding for all $\ell=1,\ldots,L$;
\item $\lim_{n\to\infty}\sigma^2\left( \Lambda^{(n)}_\ell\right) = P_\ell$ for all $\ell=1,\ldots,L$;
\item For any $1\leq k<m\leq L$ the sequence of nested lattice codebooks $\m{L}^{(n)}_{km}\triangleq  \Lambda_m^{(n)}\cap\CV^{(n)}_k$ has rate $R^{(n)}_{km}\triangleq\frac{1}{n}\log\left|\m{L}^{(n)}_{km}\right|$ that satisfy\footnote{All logarithms in this paper are to the base $2$, and therefore all rates are expressed in bits per (real) channel use.}
    \begin{align}
    \lim_{n\to\infty}R^{(n)}_{km}=\frac{1}{2}\log\left(\frac{P_k}{P_m}\right).\nonumber
    \end{align}
\end{enumerate}
\label{thm:goodchains}
\end{theorem}

\vspace{1mm}

\begin{proof}
Fix $\alpha_1>0$ and, for any $1<\ell\leq L$, set $\alpha_\ell=\alpha_1+\log\left(\tfrac{P_1}{P_\ell} \right)$. By Theorem~\ref{thm:main} there exists a sequence $\Lambda^{(n)}_1\subset\cdots\subset\Lambda^{(n)}_L$, where all lattices are good for MSE quantization and for coding, and in addition, $\lim_{n\to\infty}V^{\frac{2}{n}}\left(\Lambda^{(n)}_\ell\right)=2\pi e 2^{-\alpha_1}\left(\tfrac{P_\ell}{P_1} \right)$. Scaling all lattices in the sequence by $P_1 2^{\alpha_1}$, we get a sequence of lattices that are good for MSE quantization and coding for which $\lim_{n\to\infty}V^{\frac{2}{n}}\left(\Lambda^{(n)}_\ell\right)=2\pi e P_{\ell}$. Since $\sigma^2(\Lambda)=G(\Lambda)V^{\tfrac{2}{n}}(\Lambda)$, the above implies that $\lim_{n\to\infty}\sigma^2\left(\Lambda^{(n)}_\ell\right)=P_{\ell}$ for all $\ell$. In addition,
\begin{align}
\lim_{n\to\infty} R^{(n)}_{km}&=\frac{1}{2}\log\left(\frac{\lim_{n\to\infty}V^{\tfrac{2}{n}}(\Lambda_k^{(n)})}{\lim_{n\to\infty}V^{\tfrac{2}{n}}(\Lambda_m^{(n)})}\right) \nonumber\\
&=\frac{1}{2}\log\left(\frac{P_k}{P_m}\right),\nonumber
\end{align}
as desired.
\end{proof}

It is important to note that Theorem~\ref{thm:main} and Theorem~\ref{thm:goodchains} can be obtained as a special case of the more general results proved in~\cite{kp07,compAndForIeee}.
These results showed the existence of chains of nested lattices where all latices in the chain are both good for coding and good for covering. Goodness for covering implies goodness for MSE quantization~\cite{zf96,ramibook}, and is therefore a stronger property.
However the existence proofs of such chains are quite complicated, and are not self-contained. In particular, these proofs involve starting with a lattice that is good for covering, whose existence is difficult to establish, and rotating a random Construction A lattice using it.
Our main contribution in this paper is in providing a relatively simple, and self-contained proof for Theorem~\ref{thm:main}, from first principles.

In~\cite{ErezZamirAWGN} it was shown that if $\Lambda$ is good for covering, $\bU$ is an independent random vector uniformly distributed over the Voronoi region of $\Lambda$, and $\bZ$ is AWGN with variance $\sigma^2$, then a linear combination $\alpha\bZ+\beta\bU$ is close in distribution to an AWGN with variance $\alpha^2\sigma^2+\beta^2\sigma^2(\Lambda)$. This property played an important role in the analysis of the AWGN capacity achieving nested lattice scheme of~\cite{ErezZamirAWGN}, namely, the mod-$\Lambda$ scheme.

In order to show that pairs of nested lattices that are good according to Definition~\ref{def:goodpairs} achieve the AWGN capacity under the mod-$\Lambda$ coding scheme introduced in~\cite{ErezZamirAWGN}, we need the following theorem that states that any linear combination of semi norm-ergodic noise and a dither from a lattice that is good for MSE quantization is itself semi norm-ergodic.

\vspace{1mm}

\begin{theorem}
\label{thm:effball}
Let $\bZ=\alpha\bN+\beta\bU$, where $\alpha,\beta\in\RR$, $\bN$ is semi norm-ergodic noise, and $\bU$ is a dither statistically independent of $\bN$, uniformly distributed over the Voronoi region $\CV$ of a lattice $\Lambda$ that is good for MSE quantization. Then, the random vector $\bZ$ is semi norm-ergodic.
\end{theorem}

The proof is given in Section~\ref{sec:mixture}. In~\cite{ErezZamirAWGN} it was shown that a nested lattice codebook $\m{L}=\Lambda_f\cap\CV_c$, based on a pair $\Lambda_c\subset\Lambda_f$ where both lattices are good for covering and Poltyrev good can achieve the capacity (as well as the Poltyrev error exponent) of the AWGN channel under the mod-$\Lambda$ scheme. Theorem~\ref{thm:lapidoth}, stated below, shows that the capacity result continues to hold even if the two lattices $\Lambda_c\subset\Lambda_f$ are only good for MSE quantization and for coding, i.e., good according to Definition~\ref{def:goodpairs}. The existence of such good nested lattice pairs is guaranteed by Theorem~\ref{thm:goodchains}. Theorem~\ref{thm:lapidoth} further extends the main result of~\cite{ErezZamirAWGN} to any additive semi norm-ergodic noise channel.

\begin{theorem}
Consider an additive noise channel $Y=X+N$, where $N$ is a semi norm-ergodic noise process with effective variance $\sigma^2_{\bN}=1$ and the input is subject to the power constraint $\tfrac{1}{n}\mathbb{E}\|\bX^2\|^2<\Tsnr$. For any $R<\tfrac{1}{2}\log(1+\Tsnr)$ there exists a sequence of nested lattice codebooks $\m{L}^{(n)}=\Lambda^{(n)}_f\cap\CV^{(n)}_c$ based on a sequence of good nested lattice pairs $\Lambda_c^{(n)}\subset \Lambda_f^{(n)}$, whose rate approaches $R$ and attains a vanishing error probability under the mod-$\Lambda$ scheme.
\label{thm:lapidoth}
\end{theorem}

\begin{proof}
Fix $0<\epsilon<1$ and let $\Lambda_c^{(n)}\subset \Lambda_f^{(n)}$ be a sequence of good nested lattice pairs with
\begin{align}
&\lim_{n\to\infty} \sigma^2\left(\Lambda_c^{(n)}\right)=\Tsnr,\nonumber\\
&\lim_{n\to\infty} \sigma^2\left(\Lambda_f^{(n)}\right)=(1+\epsilon)\tfrac{\Tsnr}{1+\Tsnr},\nonumber
\end{align}
such that the rate of the sequence of codebooks $\mathcal{L}^{(n)}=\Lambda^{(n)}_f\cap\CV^{(n)}_c$ satisfies
\begin{align}
\lim_{n\to\infty}R^{(n)}=\frac{1}{2}\log\left(\frac{1}{1+\epsilon}(1+\Tsnr)\right).\nonumber
\end{align}
The existence of such a sequence of nested lattice pairs is guaranteed by Theorem~\ref{thm:goodchains}. For brevity, we omit the sequence superscripts in the remainder of the proof, and simply use $\Lambda_c,\CV_c,\Lambda_f,\m{L}$ and $R$.

Next, apply the mod-$\Lambda$ scheme of~\cite{ErezZamirAWGN} with the codebook $\m{L}$. Each of the $2^{nR}$ messages is mapped to a codeword in $\mathcal{L}$. Assume the transmitter wants to send the message $w$ which corresponds to the codeword $\bt\in\mathcal{L}$. It transmits
\begin{align}
\bX=[\bt-\bU]\Mod_c,\nonumber
\end{align}
where $\bU$ is a random dither statistically independent of $\bt$, known to both the transmitter and the receiver, uniformly distributed over $\CV_c$. Due to the Crypto Lemma~\cite[Lemma 1]{ErezZamirAWGN}, $\bX$ is also uniformly distributed over $\CV_c$ and is statistically independent of $\bt$. Thus, the average transmission power is $\tfrac{1}{n}\mathbb{E}\|\bX\|^2=\sigma^2(\Lambda_c)=\Tsnr$.

The receiver scales its observation by a factor $\alpha>0$ to be specified later, adds back the dither $\bU$ and reduces the result modulo the coarse lattice
\begin{align}
\bYe&=\left[\alpha \bY+\bU \right]\Mod_c\nonumber\\
&=\left[\bX+\bU+(\alpha-1)\bX+\alpha\bN\right]\Mod_c\nonumber\\
&=\left[\bt+(\alpha-1)\bX+\alpha\bN\right]\Mod_c\nonumber\\
&=\left[\bt+\bZe\right]\Mod_c,\label{}
\label{Yeff}
\end{align}
where
\begin{align}
\bZe=(\alpha-1)\bX+\alpha\bN
\label{effNoise}
\end{align}
is effective noise, that is statistically independent of $\bt$, with effective variance
\begin{align}
\sigma_{\text{eff}}^2(\alpha)\triangleq\frac{1}{n}\mathbb{E}\|\bZe\|^2=\alpha^2+(1-\alpha)^2\Tsnr.\label{vareff}
\end{align}
Since $\bN$ is semi norm-ergodic, and $\bX$ is uniformly distributed over the Voronoi region of a lattice that is good for MSE quantization, Theorem~\ref{thm:effball} implies that $\bZe$ is semi norm-ergodic with effective variance $\sigma_{\text{eff}}^2(\alpha)$. Setting $\alpha=\Tsnr/(1+\Tsnr)$, such as to minimize $\sigma_{\text{eff}}^2(\alpha)$ results in effective variance $\sigma_{\text{eff}}^2=\Tsnr/(1+\Tsnr)$.

The receiver next computes
\begin{align}
\hat{\bt}&=Q_{\Lambda_f}(\bYe)\nonumber\\
&=Q_{\Lambda_f}(\left[\bt+\bZe\right]\Mod_c)\nonumber\\
&=\left[Q_{\Lambda_f}(\bt+\bZe)\right]\Mod_c,\label{codetNN}
\end{align}
and outputs the message corresponding to $\hat{\bt}$ as its estimate. Since $\Lambda_f$ is good for coding, $\bZe$ is semi norm-ergodic, and
\begin{align}
\lim_{n\to\infty}\frac{V^{\frac{2}{n}}(\Lambda_f)}{2\pi e}=(1+\epsilon)\frac{\Tsnr}{1+\Tsnr}>\sigma^2_{\text{eff}},\nonumber
\end{align}
we have that $\Pr(\hat{\bt}\neq \bt)\to 0$ as $n\to\infty$. Taking $\epsilon\to 0$ completes the proof.
\end{proof}

\begin{remark}
We remark that Theorem~\ref{thm:lapidoth} is analogous to the results of~\cite{LapidothNNdecoding} where it is shown that a Gaussian i.i.d. codebook ensemble with nearest neighbor decoding can attain any rate smaller than $\tfrac{1}{2}\log(1+\Tsnr)$ over an additive semi norm-ergodic noise channel. Our result show that the same rate can be attained using nested lattice codes and the mod-$\Lambda$ scheme.
\end{remark}

\begin{remark}
We have shown that nested lattice pairs that are good according to Definition~\ref{def:goodpairs} suffice to achieve the capacity of the AWGN channel. Similarly, it can be shown that such pairs can attain the optimal rate-distortion tradeoff for the a Gaussian source, as well as the optimal rate-distortion trade-off for the Wyner-Ziv problem, under the scheme from~\cite{zs98,nestedLattices}.
\end{remark}

\begin{remark}
In certain applications, chains of nested lattice codes are used in order to convert a Gaussian multiple access channel (MAC) into an effective modulo-lattice channel whose output is a fine lattice point plus effective noise reduced modulo a coarse lattice. Such a situation arises for example in the compute-and-forward framework~\cite{compAndForIeee}, where a receiver is interested in decoding linear combinations with integer valued coefficients of the codewords transmitted by the different users of the MAC. In such applications, the effective noise is often a linear combination of AWGN and \emph{multiple} statistically independent dithers uniformly distributed over the Voronoi region of the coarse lattice. Corollary~\ref{cor:cofmixturenoise}, stated in Section~\ref{sec:mixture}, shows that such an effective noise is semi norm-ergodic regardless of the number of dithers contributing to it, as long as they are all independent and are induced by lattices that are good for MSE quantization. Consequently, nested lattice chains where all lattices are good for MSE quantization and coding, whose existence is guaranteed by Theorem~\ref{thm:goodchains}, suffice to recover all results from~\cite{Philosof,compAndForIeee,ncl10,Bresler,CoFTransformFull,ncnc15} any many other achievable rate regions based on nested lattice coding schemes. Moreover, the analysis in the proof of Theorem~\ref{thm:lapidoth} assumes that the additive noise is semi norm-ergodic, and not necessarily AWGN. Consequently, using a similar analysis it is possible to extend all the results from~\cite{Philosof,compAndForIeee,ncl10,Bresler,CoFTransformFull,ncnc15} to networks with any semi norm-ergodic additive noise.
\end{remark}

\vspace{1mm}

As evident from the proof of Theorem~\ref{thm:lapidoth}, the main role of the coarse lattice $\Lambda_c$ in the mod-$\Lambda$ scheme is to perform shaping. More specifically, the input to the channel is uniformly distributed on $\CV_c$ and in order to approach capacity, such distribution must approach an AWGN as the dimension grows.

In practice, shaping is often avoided in order to reduce the implementation complexity. However, one can always use a nested lattice codebook where the coarse lattice is the simple one-dimensional cubic (integer) lattice, which is of course, not good for MSE quantization. In fact, many practical communication systems apply a $p$-ary linear code, e.g. turbo or LDPC, mapped to a PAM/QAM constellation. The induced constellation in the Euclidean space can be thought of as a nested lattice codebook $\gamma\Lambda_f\cap\gamma\CV_c$, where $\Lambda_f$ is a Construction A lattice based on the chosen linear code, whereas $\Lambda_c$ is the integer lattice $\ZZ^n$.

The scaling parameter $\gamma$, in this case, is dictated by the power constraint. For example, if the power constraint is $\mathbb{E}(X^2)\leq \Tsnr$ the scaling parameter would be $\gamma=\sqrt{12\Tsnr}$. Since $\gamma\Lambda_c=\gamma\ZZ^n\subseteq\gamma\Lambda_f$, the minimum distance in $\gamma\Lambda_f$ cannot exceed $\sqrt{3\Tsnr}$, and in particular does not grow with the dimension. Thus, $\Pr(Q_{\gamma\Lambda_f}(\bt+\bZe)\neq \bt)$ cannot vanish with the lattice dimension, and consequently $\gamma\Lambda_f$ is not good for coding.\footnote{In the next section we specify the ensemble of nested lattices used for the proof of Theorem~\ref{thm:main}, in which $\gamma$ grows as $\sqrt{n}$ in order to avoid this problem.}

Nevertheless, as evident from~\eqref{codetNN}, an error occurs if and only if the lattice point $Q_{\gamma\Lambda_f}(\bt+\bZe)$ is not in the same coset of $\gamma\Lambda_f/\gamma\Lambda_c$ as $\bt$.

\vspace{1mm}

\begin{definition}
The \emph{coset nearest neighbor decoder} with respect to the nested lattice pair $\Lambda_c\subset\Lambda_f$ outputs for every $\by\in\RR^n$ the lattice point $\left[Q_{\Lambda_f}(\by)\right]\Mod_c$.
\end{definition}

\vspace{1mm}

It follows that the mod-$\Lambda$ scheme succeeds if the coset nearest neighbor decoder finds the correct coset. For the case where the coarse lattice is $\gamma\ZZ^n$ this corresponds to $Q_{\gamma\Lambda_f}(\bt+\bZe)=\bt \bmod\gamma$. See Figure~\ref{fig:cosetdecoder} for an illustration. Note that under coset nearest neighbor decoding, the aforementioned pairs of points in $\gamma\Lambda_f$, whose distance is $\gamma$, do not incur an error. Thus, it may be possible to attain an error probability the vanishes with the dimension using the mod-$\Lambda$ scheme.

\begin{figure}[h]
\psset{unit=.90mm}
\begin{center}
\begin{pspicture*}(30,30)(115,115)
\rput(0,0){

\rput(0,0){\psframe[linecolor=gray,linewidth=0.5pt](0,0)(44,44)

\rput(0,0){\pscircle[linecolor=black,fillstyle=solid,fillcolor=black](0,0){1}
\psline[linecolor=gray](4.1818,-7.4545)(-0.1818,-8.5455)(-7.8182,-3.4545)(-4.1818,7.4545)(0.1818,8.5455)( 7.8182,3.4545)(4.1818,-7.4545)
}

\rput(8,12){\pscircle[linecolor=black,fillstyle=solid,fillcolor=black](0,0){1}
\psline[linecolor=gray](4.1818,-7.4545)(-0.1818,-8.5455)(-7.8182,-3.4545)(-4.1818,7.4545)(0.1818,8.5455)( 7.8182,3.4545)(4.1818,-7.4545)
}

\rput(16,24){\pscircle[linecolor=black,fillstyle=solid,fillcolor=black](0,0){1}
\psline[linecolor=gray](4.1818,-7.4545)(-0.1818,-8.5455)(-7.8182,-3.4545)(-4.1818,7.4545)(0.1818,8.5455)( 7.8182,3.4545)(4.1818,-7.4545)
}

\rput(24,36){\pscircle[linecolor=black,fillstyle=solid,fillcolor=black](0,0){1}
\psline[linecolor=gray](4.1818,-7.4545)(-0.1818,-8.5455)(-7.8182,-3.4545)(-4.1818,7.4545)(0.1818,8.5455)( 7.8182,3.4545)(4.1818,-7.4545)
}

\rput(32,4){\pscircle[linecolor=black,fillstyle=solid,fillcolor=black](0,0){1}
\psline[linecolor=gray](4.1818,-7.4545)(-0.1818,-8.5455)(-7.8182,-3.4545)(-4.1818,7.4545)(0.1818,8.5455)( 7.8182,3.4545)(4.1818,-7.4545)
}

\rput(40,16){\pscircle[linecolor=black,fillstyle=solid,fillcolor=black](0,0){1}
\psline[linecolor=gray](4.1818,-7.4545)(-0.1818,-8.5455)(-7.8182,-3.4545)(-4.1818,7.4545)(0.1818,8.5455)( 7.8182,3.4545)(4.1818,-7.4545)
}

\rput(4,28){\pscircle[linecolor=black,fillstyle=solid,fillcolor=black](0,0){1}
\psline[linecolor=gray](4.1818,-7.4545)(-0.1818,-8.5455)(-7.8182,-3.4545)(-4.1818,7.4545)(0.1818,8.5455)( 7.8182,3.4545)(4.1818,-7.4545)
}

\rput(12,40){\pscircle[linecolor=black,fillstyle=solid,fillcolor=black](0,0){1}
\psline[linecolor=gray](4.1818,-7.4545)(-0.1818,-8.5455)(-7.8182,-3.4545)(-4.1818,7.4545)(0.1818,8.5455)( 7.8182,3.4545)(4.1818,-7.4545)
}

\rput(20,8){\pscircle[linecolor=black,fillstyle=solid,fillcolor=black](0,0){1}
\psline[linecolor=gray](4.1818,-7.4545)(-0.1818,-8.5455)(-7.8182,-3.4545)(-4.1818,7.4545)(0.1818,8.5455)( 7.8182,3.4545)(4.1818,-7.4545)
}

\rput(28,20){\pscircle[linecolor=black,fillstyle=solid,fillcolor=black](0,0){1}
\psline[linecolor=gray](4.1818,-7.4545)(-0.1818,-8.5455)(-7.8182,-3.4545)(-4.1818,7.4545)(0.1818,8.5455)( 7.8182,3.4545)(4.1818,-7.4545)
}

\rput(36,32){\pscircle[linecolor=black,fillstyle=solid,fillcolor=black](0,0){1}
\psline[linecolor=gray](4.1818,-7.4545)(-0.1818,-8.5455)(-7.8182,-3.4545)(-4.1818,7.4545)(0.1818,8.5455)( 7.8182,3.4545)(4.1818,-7.4545)
}}
\rput(0,44){\psframe[linecolor=gray,linewidth=0.5pt](0,0)(44,44)

\rput(0,0){\pscircle[linecolor=black,fillstyle=solid,fillcolor=black](0,0){1}
\psline[linecolor=gray](4.1818,-7.4545)(-0.1818,-8.5455)(-7.8182,-3.4545)(-4.1818,7.4545)(0.1818,8.5455)( 7.8182,3.4545)(4.1818,-7.4545)
}

\rput(8,12){\pscircle[linecolor=black,fillstyle=solid,fillcolor=black](0,0){1}
\psline[linecolor=gray](4.1818,-7.4545)(-0.1818,-8.5455)(-7.8182,-3.4545)(-4.1818,7.4545)(0.1818,8.5455)( 7.8182,3.4545)(4.1818,-7.4545)
}

\rput(16,24){\pscircle[linecolor=black,fillstyle=solid,fillcolor=black](0,0){1}
\psline[linecolor=gray](4.1818,-7.4545)(-0.1818,-8.5455)(-7.8182,-3.4545)(-4.1818,7.4545)(0.1818,8.5455)( 7.8182,3.4545)(4.1818,-7.4545)
}

\rput(24,36){\pscircle[linecolor=black,fillstyle=solid,fillcolor=black](0,0){1}
\psline[linecolor=gray](4.1818,-7.4545)(-0.1818,-8.5455)(-7.8182,-3.4545)(-4.1818,7.4545)(0.1818,8.5455)( 7.8182,3.4545)(4.1818,-7.4545)
}

\rput(32,4){\pscircle[linecolor=black,fillstyle=solid,fillcolor=black](0,0){1}
\psline[linecolor=gray](4.1818,-7.4545)(-0.1818,-8.5455)(-7.8182,-3.4545)(-4.1818,7.4545)(0.1818,8.5455)( 7.8182,3.4545)(4.1818,-7.4545)
}

\rput(40,16){\pscircle[linecolor=black,fillstyle=solid,fillcolor=black](0,0){1}
\psline[linecolor=gray](4.1818,-7.4545)(-0.1818,-8.5455)(-7.8182,-3.4545)(-4.1818,7.4545)(0.1818,8.5455)( 7.8182,3.4545)(4.1818,-7.4545)
}

\rput(4,28){\pscircle[linecolor=black,fillstyle=solid,fillcolor=black](0,0){1}
\psline[linecolor=gray](4.1818,-7.4545)(-0.1818,-8.5455)(-7.8182,-3.4545)(-4.1818,7.4545)(0.1818,8.5455)( 7.8182,3.4545)(4.1818,-7.4545)
}

\rput(12,40){\pscircle[linecolor=black,fillstyle=solid,fillcolor=black](0,0){1}
\psline[linecolor=gray](4.1818,-7.4545)(-0.1818,-8.5455)(-7.8182,-3.4545)(-4.1818,7.4545)(0.1818,8.5455)( 7.8182,3.4545)(4.1818,-7.4545)
}

\rput(20,8){\pscircle[linecolor=black,fillstyle=solid,fillcolor=black](0,0){1}
\psline[linecolor=gray](4.1818,-7.4545)(-0.1818,-8.5455)(-7.8182,-3.4545)(-4.1818,7.4545)(0.1818,8.5455)( 7.8182,3.4545)(4.1818,-7.4545)
}

\rput(28,20){\pscircle[linecolor=black,fillstyle=solid,fillcolor=black](0,0){1}
\psline[linecolor=gray](4.1818,-7.4545)(-0.1818,-8.5455)(-7.8182,-3.4545)(-4.1818,7.4545)(0.1818,8.5455)( 7.8182,3.4545)(4.1818,-7.4545)
}

\rput(36,32){\pscircle[linecolor=black,fillstyle=solid,fillcolor=black](0,0){1}
\psline[linecolor=gray](4.1818,-7.4545)(-0.1818,-8.5455)(-7.8182,-3.4545)(-4.1818,7.4545)(0.1818,8.5455)( 7.8182,3.4545)(4.1818,-7.4545)
}}
\rput(44,0){\psframe[linecolor=gray,linewidth=0.5pt](0,0)(44,44)

\rput(0,0){\pscircle[linecolor=black,fillstyle=solid,fillcolor=black](0,0){1}
\psline[linecolor=gray](4.1818,-7.4545)(-0.1818,-8.5455)(-7.8182,-3.4545)(-4.1818,7.4545)(0.1818,8.5455)( 7.8182,3.4545)(4.1818,-7.4545)
}

\rput(8,12){\pscircle[linecolor=black,fillstyle=solid,fillcolor=black](0,0){1}
\psline[linecolor=gray](4.1818,-7.4545)(-0.1818,-8.5455)(-7.8182,-3.4545)(-4.1818,7.4545)(0.1818,8.5455)( 7.8182,3.4545)(4.1818,-7.4545)
}

\rput(16,24){\pscircle[linecolor=black,fillstyle=solid,fillcolor=black](0,0){1}
\psline[linecolor=gray](4.1818,-7.4545)(-0.1818,-8.5455)(-7.8182,-3.4545)(-4.1818,7.4545)(0.1818,8.5455)( 7.8182,3.4545)(4.1818,-7.4545)
}

\rput(24,36){\pscircle[linecolor=black,fillstyle=solid,fillcolor=black](0,0){1}
\psline[linecolor=gray](4.1818,-7.4545)(-0.1818,-8.5455)(-7.8182,-3.4545)(-4.1818,7.4545)(0.1818,8.5455)( 7.8182,3.4545)(4.1818,-7.4545)
}

\rput(32,4){\pscircle[linecolor=black,fillstyle=solid,fillcolor=black](0,0){1}
\psline[linecolor=gray](4.1818,-7.4545)(-0.1818,-8.5455)(-7.8182,-3.4545)(-4.1818,7.4545)(0.1818,8.5455)( 7.8182,3.4545)(4.1818,-7.4545)
}

\rput(40,16){\pscircle[linecolor=black,fillstyle=solid,fillcolor=black](0,0){1}
\psline[linecolor=gray](4.1818,-7.4545)(-0.1818,-8.5455)(-7.8182,-3.4545)(-4.1818,7.4545)(0.1818,8.5455)( 7.8182,3.4545)(4.1818,-7.4545)
}

\rput(4,28){\pscircle[linecolor=black,fillstyle=solid,fillcolor=black](0,0){1}
\psline[linecolor=gray](4.1818,-7.4545)(-0.1818,-8.5455)(-7.8182,-3.4545)(-4.1818,7.4545)(0.1818,8.5455)( 7.8182,3.4545)(4.1818,-7.4545)
}

\rput(12,40){\pscircle[linecolor=black,fillstyle=solid,fillcolor=black](0,0){1}
\psline[linecolor=gray](4.1818,-7.4545)(-0.1818,-8.5455)(-7.8182,-3.4545)(-4.1818,7.4545)(0.1818,8.5455)( 7.8182,3.4545)(4.1818,-7.4545)
}

\rput(20,8){\pscircle[linecolor=black,fillstyle=solid,fillcolor=black](0,0){1}
\psline[linecolor=gray](4.1818,-7.4545)(-0.1818,-8.5455)(-7.8182,-3.4545)(-4.1818,7.4545)(0.1818,8.5455)( 7.8182,3.4545)(4.1818,-7.4545)
}

\rput(28,20){\pscircle[linecolor=black,fillstyle=solid,fillcolor=black](0,0){1}
\psline[linecolor=gray](4.1818,-7.4545)(-0.1818,-8.5455)(-7.8182,-3.4545)(-4.1818,7.4545)(0.1818,8.5455)( 7.8182,3.4545)(4.1818,-7.4545)
}

\rput(36,32){\pscircle[linecolor=black,fillstyle=solid,fillcolor=black](0,0){1}
\psline[linecolor=gray](4.1818,-7.4545)(-0.1818,-8.5455)(-7.8182,-3.4545)(-4.1818,7.4545)(0.1818,8.5455)( 7.8182,3.4545)(4.1818,-7.4545)
}}
\rput(44,88){\psframe[linecolor=gray,linewidth=0.5pt](0,0)(44,44)

\rput(0,0){\pscircle[linecolor=black,fillstyle=solid,fillcolor=black](0,0){1}
\psline[linecolor=gray](4.1818,-7.4545)(-0.1818,-8.5455)(-7.8182,-3.4545)(-4.1818,7.4545)(0.1818,8.5455)( 7.8182,3.4545)(4.1818,-7.4545)
}

\rput(8,12){\pscircle[linecolor=black,fillstyle=solid,fillcolor=black](0,0){1}
\psline[linecolor=gray](4.1818,-7.4545)(-0.1818,-8.5455)(-7.8182,-3.4545)(-4.1818,7.4545)(0.1818,8.5455)( 7.8182,3.4545)(4.1818,-7.4545)
}

\rput(16,24){\pscircle[linecolor=black,fillstyle=solid,fillcolor=black](0,0){1}
\psline[linecolor=gray](4.1818,-7.4545)(-0.1818,-8.5455)(-7.8182,-3.4545)(-4.1818,7.4545)(0.1818,8.5455)( 7.8182,3.4545)(4.1818,-7.4545)
}

\rput(24,36){\pscircle[linecolor=black,fillstyle=solid,fillcolor=black](0,0){1}
\psline[linecolor=gray](4.1818,-7.4545)(-0.1818,-8.5455)(-7.8182,-3.4545)(-4.1818,7.4545)(0.1818,8.5455)( 7.8182,3.4545)(4.1818,-7.4545)
}

\rput(32,4){\pscircle[linecolor=black,fillstyle=solid,fillcolor=black](0,0){1}
\psline[linecolor=gray](4.1818,-7.4545)(-0.1818,-8.5455)(-7.8182,-3.4545)(-4.1818,7.4545)(0.1818,8.5455)( 7.8182,3.4545)(4.1818,-7.4545)
}

\rput(40,16){\pscircle[linecolor=black,fillstyle=solid,fillcolor=black](0,0){1}
\psline[linecolor=gray](4.1818,-7.4545)(-0.1818,-8.5455)(-7.8182,-3.4545)(-4.1818,7.4545)(0.1818,8.5455)( 7.8182,3.4545)(4.1818,-7.4545)
}

\rput(4,28){\pscircle[linecolor=black,fillstyle=solid,fillcolor=black](0,0){1}
\psline[linecolor=gray](4.1818,-7.4545)(-0.1818,-8.5455)(-7.8182,-3.4545)(-4.1818,7.4545)(0.1818,8.5455)( 7.8182,3.4545)(4.1818,-7.4545)
}

\rput(12,40){\pscircle[linecolor=black,fillstyle=solid,fillcolor=black](0,0){1}
\psline[linecolor=gray](4.1818,-7.4545)(-0.1818,-8.5455)(-7.8182,-3.4545)(-4.1818,7.4545)(0.1818,8.5455)( 7.8182,3.4545)(4.1818,-7.4545)
}

\rput(20,8){\pscircle[linecolor=black,fillstyle=solid,fillcolor=black](0,0){1}
\psline[linecolor=gray](4.1818,-7.4545)(-0.1818,-8.5455)(-7.8182,-3.4545)(-4.1818,7.4545)(0.1818,8.5455)( 7.8182,3.4545)(4.1818,-7.4545)
}

\rput(28,20){\pscircle[linecolor=black,fillstyle=solid,fillcolor=black](0,0){1}
\psline[linecolor=gray](4.1818,-7.4545)(-0.1818,-8.5455)(-7.8182,-3.4545)(-4.1818,7.4545)(0.1818,8.5455)( 7.8182,3.4545)(4.1818,-7.4545)
}

\rput(36,32){\pscircle[linecolor=black,fillstyle=solid,fillcolor=black](0,0){1}
\psline[linecolor=gray](4.1818,-7.4545)(-0.1818,-8.5455)(-7.8182,-3.4545)(-4.1818,7.4545)(0.1818,8.5455)( 7.8182,3.4545)(4.1818,-7.4545)
}}
\rput(88,0){\psframe[linecolor=gray,linewidth=0.5pt](0,0)(44,44)

\rput(0,0){\pscircle[linecolor=black,fillstyle=solid,fillcolor=black](0,0){1}
\psline[linecolor=gray](4.1818,-7.4545)(-0.1818,-8.5455)(-7.8182,-3.4545)(-4.1818,7.4545)(0.1818,8.5455)( 7.8182,3.4545)(4.1818,-7.4545)
}

\rput(8,12){\pscircle[linecolor=black,fillstyle=solid,fillcolor=black](0,0){1}
\psline[linecolor=gray](4.1818,-7.4545)(-0.1818,-8.5455)(-7.8182,-3.4545)(-4.1818,7.4545)(0.1818,8.5455)( 7.8182,3.4545)(4.1818,-7.4545)
}

\rput(16,24){\pscircle[linecolor=black,fillstyle=solid,fillcolor=black](0,0){1}
\psline[linecolor=gray](4.1818,-7.4545)(-0.1818,-8.5455)(-7.8182,-3.4545)(-4.1818,7.4545)(0.1818,8.5455)( 7.8182,3.4545)(4.1818,-7.4545)
}

\rput(24,36){\pscircle[linecolor=black,fillstyle=solid,fillcolor=black](0,0){1}
\psline[linecolor=gray](4.1818,-7.4545)(-0.1818,-8.5455)(-7.8182,-3.4545)(-4.1818,7.4545)(0.1818,8.5455)( 7.8182,3.4545)(4.1818,-7.4545)
}

\rput(32,4){\pscircle[linecolor=black,fillstyle=solid,fillcolor=black](0,0){1}
\psline[linecolor=gray](4.1818,-7.4545)(-0.1818,-8.5455)(-7.8182,-3.4545)(-4.1818,7.4545)(0.1818,8.5455)( 7.8182,3.4545)(4.1818,-7.4545)
}

\rput(40,16){\pscircle[linecolor=black,fillstyle=solid,fillcolor=black](0,0){1}
\psline[linecolor=gray](4.1818,-7.4545)(-0.1818,-8.5455)(-7.8182,-3.4545)(-4.1818,7.4545)(0.1818,8.5455)( 7.8182,3.4545)(4.1818,-7.4545)
}

\rput(4,28){\pscircle[linecolor=black,fillstyle=solid,fillcolor=black](0,0){1}
\psline[linecolor=gray](4.1818,-7.4545)(-0.1818,-8.5455)(-7.8182,-3.4545)(-4.1818,7.4545)(0.1818,8.5455)( 7.8182,3.4545)(4.1818,-7.4545)
}

\rput(12,40){\pscircle[linecolor=black,fillstyle=solid,fillcolor=black](0,0){1}
\psline[linecolor=gray](4.1818,-7.4545)(-0.1818,-8.5455)(-7.8182,-3.4545)(-4.1818,7.4545)(0.1818,8.5455)( 7.8182,3.4545)(4.1818,-7.4545)
}

\rput(20,8){\pscircle[linecolor=black,fillstyle=solid,fillcolor=black](0,0){1}
\psline[linecolor=gray](4.1818,-7.4545)(-0.1818,-8.5455)(-7.8182,-3.4545)(-4.1818,7.4545)(0.1818,8.5455)( 7.8182,3.4545)(4.1818,-7.4545)
}

\rput(28,20){\pscircle[linecolor=black,fillstyle=solid,fillcolor=black](0,0){1}
\psline[linecolor=gray](4.1818,-7.4545)(-0.1818,-8.5455)(-7.8182,-3.4545)(-4.1818,7.4545)(0.1818,8.5455)( 7.8182,3.4545)(4.1818,-7.4545)
}

\rput(36,32){\pscircle[linecolor=black,fillstyle=solid,fillcolor=black](0,0){1}
\psline[linecolor=gray](4.1818,-7.4545)(-0.1818,-8.5455)(-7.8182,-3.4545)(-4.1818,7.4545)(0.1818,8.5455)( 7.8182,3.4545)(4.1818,-7.4545)
}}
\rput(0,88){\psframe[linecolor=gray,linewidth=0.5pt](0,0)(44,44)

\rput(0,0){\pscircle[linecolor=black,fillstyle=solid,fillcolor=black](0,0){1}
\psline[linecolor=gray](4.1818,-7.4545)(-0.1818,-8.5455)(-7.8182,-3.4545)(-4.1818,7.4545)(0.1818,8.5455)( 7.8182,3.4545)(4.1818,-7.4545)
}

\rput(8,12){\pscircle[linecolor=black,fillstyle=solid,fillcolor=black](0,0){1}
\psline[linecolor=gray](4.1818,-7.4545)(-0.1818,-8.5455)(-7.8182,-3.4545)(-4.1818,7.4545)(0.1818,8.5455)( 7.8182,3.4545)(4.1818,-7.4545)
}

\rput(16,24){\pscircle[linecolor=black,fillstyle=solid,fillcolor=black](0,0){1}
\psline[linecolor=gray](4.1818,-7.4545)(-0.1818,-8.5455)(-7.8182,-3.4545)(-4.1818,7.4545)(0.1818,8.5455)( 7.8182,3.4545)(4.1818,-7.4545)
}

\rput(24,36){\pscircle[linecolor=black,fillstyle=solid,fillcolor=black](0,0){1}
\psline[linecolor=gray](4.1818,-7.4545)(-0.1818,-8.5455)(-7.8182,-3.4545)(-4.1818,7.4545)(0.1818,8.5455)( 7.8182,3.4545)(4.1818,-7.4545)
}

\rput(32,4){\pscircle[linecolor=black,fillstyle=solid,fillcolor=black](0,0){1}
\psline[linecolor=gray](4.1818,-7.4545)(-0.1818,-8.5455)(-7.8182,-3.4545)(-4.1818,7.4545)(0.1818,8.5455)( 7.8182,3.4545)(4.1818,-7.4545)
}

\rput(40,16){\pscircle[linecolor=black,fillstyle=solid,fillcolor=black](0,0){1}
\psline[linecolor=gray](4.1818,-7.4545)(-0.1818,-8.5455)(-7.8182,-3.4545)(-4.1818,7.4545)(0.1818,8.5455)( 7.8182,3.4545)(4.1818,-7.4545)
}

\rput(4,28){\pscircle[linecolor=black,fillstyle=solid,fillcolor=black](0,0){1}
\psline[linecolor=gray](4.1818,-7.4545)(-0.1818,-8.5455)(-7.8182,-3.4545)(-4.1818,7.4545)(0.1818,8.5455)( 7.8182,3.4545)(4.1818,-7.4545)
}

\rput(12,40){\pscircle[linecolor=black,fillstyle=solid,fillcolor=black](0,0){1}
\psline[linecolor=gray](4.1818,-7.4545)(-0.1818,-8.5455)(-7.8182,-3.4545)(-4.1818,7.4545)(0.1818,8.5455)( 7.8182,3.4545)(4.1818,-7.4545)
}

\rput(20,8){\pscircle[linecolor=black,fillstyle=solid,fillcolor=black](0,0){1}
\psline[linecolor=gray](4.1818,-7.4545)(-0.1818,-8.5455)(-7.8182,-3.4545)(-4.1818,7.4545)(0.1818,8.5455)( 7.8182,3.4545)(4.1818,-7.4545)
}

\rput(28,20){\pscircle[linecolor=black,fillstyle=solid,fillcolor=black](0,0){1}
\psline[linecolor=gray](4.1818,-7.4545)(-0.1818,-8.5455)(-7.8182,-3.4545)(-4.1818,7.4545)(0.1818,8.5455)( 7.8182,3.4545)(4.1818,-7.4545)
}

\rput(36,32){\pscircle[linecolor=black,fillstyle=solid,fillcolor=black](0,0){1}
\psline[linecolor=gray](4.1818,-7.4545)(-0.1818,-8.5455)(-7.8182,-3.4545)(-4.1818,7.4545)(0.1818,8.5455)( 7.8182,3.4545)(4.1818,-7.4545)
}}
\rput(88,88){\psframe[linecolor=gray,linewidth=0.5pt](0,0)(44,44)

\rput(0,0){\pscircle[linecolor=black,fillstyle=solid,fillcolor=black](0,0){1}
\psline[linecolor=gray](4.1818,-7.4545)(-0.1818,-8.5455)(-7.8182,-3.4545)(-4.1818,7.4545)(0.1818,8.5455)( 7.8182,3.4545)(4.1818,-7.4545)
}

\rput(8,12){\pscircle[linecolor=black,fillstyle=solid,fillcolor=black](0,0){1}
\psline[linecolor=gray](4.1818,-7.4545)(-0.1818,-8.5455)(-7.8182,-3.4545)(-4.1818,7.4545)(0.1818,8.5455)( 7.8182,3.4545)(4.1818,-7.4545)
}

\rput(16,24){\pscircle[linecolor=black,fillstyle=solid,fillcolor=black](0,0){1}
\psline[linecolor=gray](4.1818,-7.4545)(-0.1818,-8.5455)(-7.8182,-3.4545)(-4.1818,7.4545)(0.1818,8.5455)( 7.8182,3.4545)(4.1818,-7.4545)
}

\rput(24,36){\pscircle[linecolor=black,fillstyle=solid,fillcolor=black](0,0){1}
\psline[linecolor=gray](4.1818,-7.4545)(-0.1818,-8.5455)(-7.8182,-3.4545)(-4.1818,7.4545)(0.1818,8.5455)( 7.8182,3.4545)(4.1818,-7.4545)
}

\rput(32,4){\pscircle[linecolor=black,fillstyle=solid,fillcolor=black](0,0){1}
\psline[linecolor=gray](4.1818,-7.4545)(-0.1818,-8.5455)(-7.8182,-3.4545)(-4.1818,7.4545)(0.1818,8.5455)( 7.8182,3.4545)(4.1818,-7.4545)
}

\rput(40,16){\pscircle[linecolor=black,fillstyle=solid,fillcolor=black](0,0){1}
\psline[linecolor=gray](4.1818,-7.4545)(-0.1818,-8.5455)(-7.8182,-3.4545)(-4.1818,7.4545)(0.1818,8.5455)( 7.8182,3.4545)(4.1818,-7.4545)
}

\rput(4,28){\pscircle[linecolor=black,fillstyle=solid,fillcolor=black](0,0){1}
\psline[linecolor=gray](4.1818,-7.4545)(-0.1818,-8.5455)(-7.8182,-3.4545)(-4.1818,7.4545)(0.1818,8.5455)( 7.8182,3.4545)(4.1818,-7.4545)
}

\rput(12,40){\pscircle[linecolor=black,fillstyle=solid,fillcolor=black](0,0){1}
\psline[linecolor=gray](4.1818,-7.4545)(-0.1818,-8.5455)(-7.8182,-3.4545)(-4.1818,7.4545)(0.1818,8.5455)( 7.8182,3.4545)(4.1818,-7.4545)
}

\rput(20,8){\pscircle[linecolor=black,fillstyle=solid,fillcolor=black](0,0){1}
\psline[linecolor=gray](4.1818,-7.4545)(-0.1818,-8.5455)(-7.8182,-3.4545)(-4.1818,7.4545)(0.1818,8.5455)( 7.8182,3.4545)(4.1818,-7.4545)
}

\rput(28,20){\pscircle[linecolor=black,fillstyle=solid,fillcolor=black](0,0){1}
\psline[linecolor=gray](4.1818,-7.4545)(-0.1818,-8.5455)(-7.8182,-3.4545)(-4.1818,7.4545)(0.1818,8.5455)( 7.8182,3.4545)(4.1818,-7.4545)
}

\rput(36,32){\pscircle[linecolor=black,fillstyle=solid,fillcolor=black](0,0){1}
\psline[linecolor=gray](4.1818,-7.4545)(-0.1818,-8.5455)(-7.8182,-3.4545)(-4.1818,7.4545)(0.1818,8.5455)( 7.8182,3.4545)(4.1818,-7.4545)
}}
\rput(88,44){\psframe[linecolor=gray,linewidth=0.5pt](0,0)(44,44)

\rput(0,0){\pscircle[linecolor=black,fillstyle=solid,fillcolor=black](0,0){1}
\psline[linecolor=gray](4.1818,-7.4545)(-0.1818,-8.5455)(-7.8182,-3.4545)(-4.1818,7.4545)(0.1818,8.5455)( 7.8182,3.4545)(4.1818,-7.4545)
}

\rput(8,12){\pscircle[linecolor=black,fillstyle=solid,fillcolor=black](0,0){1}
\psline[linecolor=gray](4.1818,-7.4545)(-0.1818,-8.5455)(-7.8182,-3.4545)(-4.1818,7.4545)(0.1818,8.5455)( 7.8182,3.4545)(4.1818,-7.4545)
}

\rput(16,24){\pscircle[linecolor=black,fillstyle=solid,fillcolor=black](0,0){1}
\psline[linecolor=gray](4.1818,-7.4545)(-0.1818,-8.5455)(-7.8182,-3.4545)(-4.1818,7.4545)(0.1818,8.5455)( 7.8182,3.4545)(4.1818,-7.4545)
}

\rput(24,36){\pscircle[linecolor=black,fillstyle=solid,fillcolor=black](0,0){1}
\psline[linecolor=gray](4.1818,-7.4545)(-0.1818,-8.5455)(-7.8182,-3.4545)(-4.1818,7.4545)(0.1818,8.5455)( 7.8182,3.4545)(4.1818,-7.4545)
}

\rput(32,4){\pscircle[linecolor=black,fillstyle=solid,fillcolor=black](0,0){1}
\psline[linecolor=gray](4.1818,-7.4545)(-0.1818,-8.5455)(-7.8182,-3.4545)(-4.1818,7.4545)(0.1818,8.5455)( 7.8182,3.4545)(4.1818,-7.4545)
}

\rput(40,16){\pscircle[linecolor=black,fillstyle=solid,fillcolor=black](0,0){1}
\psline[linecolor=gray](4.1818,-7.4545)(-0.1818,-8.5455)(-7.8182,-3.4545)(-4.1818,7.4545)(0.1818,8.5455)( 7.8182,3.4545)(4.1818,-7.4545)
}

\rput(4,28){\pscircle[linecolor=black,fillstyle=solid,fillcolor=black](0,0){1}
\psline[linecolor=gray](4.1818,-7.4545)(-0.1818,-8.5455)(-7.8182,-3.4545)(-4.1818,7.4545)(0.1818,8.5455)( 7.8182,3.4545)(4.1818,-7.4545)
}

\rput(12,40){\pscircle[linecolor=black,fillstyle=solid,fillcolor=black](0,0){1}
\psline[linecolor=gray](4.1818,-7.4545)(-0.1818,-8.5455)(-7.8182,-3.4545)(-4.1818,7.4545)(0.1818,8.5455)( 7.8182,3.4545)(4.1818,-7.4545)
}

\rput(20,8){\pscircle[linecolor=black,fillstyle=solid,fillcolor=black](0,0){1}
\psline[linecolor=gray](4.1818,-7.4545)(-0.1818,-8.5455)(-7.8182,-3.4545)(-4.1818,7.4545)(0.1818,8.5455)( 7.8182,3.4545)(4.1818,-7.4545)
}

\rput(28,20){\pscircle[linecolor=black,fillstyle=solid,fillcolor=black](0,0){1}
\psline[linecolor=gray](4.1818,-7.4545)(-0.1818,-8.5455)(-7.8182,-3.4545)(-4.1818,7.4545)(0.1818,8.5455)( 7.8182,3.4545)(4.1818,-7.4545)
}

\rput(36,32){\pscircle[linecolor=black,fillstyle=solid,fillcolor=black](0,0){1}
\psline[linecolor=gray](4.1818,-7.4545)(-0.1818,-8.5455)(-7.8182,-3.4545)(-4.1818,7.4545)(0.1818,8.5455)( 7.8182,3.4545)(4.1818,-7.4545)
}}

\rput(44,44){\psframe[linecolor=gray,linewidth=0.5pt](0,0)(44,44)

\rput(0,0){\pscircle[linecolor=black,fillstyle=solid,fillcolor=black](0,0){1}
\psline[linecolor=gray](4.1818,-7.4545)(-0.1818,-8.5455)(-7.8182,-3.4545)(-4.1818,7.4545)(0.1818,8.5455)( 7.8182,3.4545)(4.1818,-7.4545)
}

\rput(8,12){\pscircle[linecolor=black,fillstyle=solid,fillcolor=black](0,0){1}
\psline[linecolor=gray](4.1818,-7.4545)(-0.1818,-8.5455)(-7.8182,-3.4545)(-4.1818,7.4545)(0.1818,8.5455)( 7.8182,3.4545)(4.1818,-7.4545)
}

\rput(16,24){\pscircle[linecolor=black,fillstyle=solid,fillcolor=black](0,0){1}
\psline[linecolor=gray](4.1818,-7.4545)(-0.1818,-8.5455)(-7.8182,-3.4545)(-4.1818,7.4545)(0.1818,8.5455)( 7.8182,3.4545)(4.1818,-7.4545)
}

\rput(24,36){\pscircle[linecolor=black,fillstyle=solid,fillcolor=black](0,0){1}
\psline[linecolor=gray](4.1818,-7.4545)(-0.1818,-8.5455)(-7.8182,-3.4545)(-4.1818,7.4545)(0.1818,8.5455)( 7.8182,3.4545)(4.1818,-7.4545)
}

\rput(32,4){\pscircle[linecolor=black,fillstyle=solid,fillcolor=black](0,0){1}
\psline[linecolor=gray](4.1818,-7.4545)(-0.1818,-8.5455)(-7.8182,-3.4545)(-4.1818,7.4545)(0.1818,8.5455)( 7.8182,3.4545)(4.1818,-7.4545)
}

\rput(40,16){\pscircle[linecolor=black,fillstyle=solid,fillcolor=black](0,0){1}
\psline[linecolor=gray](4.1818,-7.4545)(-0.1818,-8.5455)(-7.8182,-3.4545)(-4.1818,7.4545)(0.1818,8.5455)( 7.8182,3.4545)(4.1818,-7.4545)
}

\rput(4,28){\pscircle[linecolor=black,fillstyle=solid,fillcolor=black](0,0){1}
\psline[linecolor=gray](4.1818,-7.4545)(-0.1818,-8.5455)(-7.8182,-3.4545)(-4.1818,7.4545)(0.1818,8.5455)( 7.8182,3.4545)(4.1818,-7.4545)
}

\rput(12,40){\pscircle[linecolor=black,fillstyle=solid,fillcolor=black](0,0){1}
\psline[linecolor=gray](4.1818,-7.4545)(-0.1818,-8.5455)(-7.8182,-3.4545)(-4.1818,7.4545)(0.1818,8.5455)( 7.8182,3.4545)(4.1818,-7.4545)
}

\rput(20,8){\pscircle[linecolor=black,fillstyle=solid,fillcolor=black](0,0){1}
\psline[linecolor=gray](4.1818,-7.4545)(-0.1818,-8.5455)(-7.8182,-3.4545)(-4.1818,7.4545)(0.1818,8.5455)( 7.8182,3.4545)(4.1818,-7.4545)
}

\rput(28,20){\pscircle[linecolor=black,fillstyle=solid,fillcolor=black](0,0){1}
\psline[linecolor=gray](4.1818,-7.4545)(-0.1818,-8.5455)(-7.8182,-3.4545)(-4.1818,7.4545)(0.1818,8.5455)( 7.8182,3.4545)(4.1818,-7.4545)
}

\rput(36,32){\pscircle[linecolor=black,fillstyle=solid,fillcolor=black](0,0){1}
\psline[linecolor=gray](4.1818,-7.4545)(-0.1818,-8.5455)(-7.8182,-3.4545)(-4.1818,7.4545)(0.1818,8.5455)( 7.8182,3.4545)(4.1818,-7.4545)
}
\pscircle[linecolor=blue,fillstyle=solid,fillcolor=blue](32,4){1}
\rput(32,7.5){$\textcolor{blue}{\mathbf{t}}$}
\psline[linecolor=red,linewidth=0.8pt,arrows=->](32,4)(33,-2.5)
\rput(27,0.5){$\textcolor{red}{\mathbf{\bZe}}$}
\pscircle[linecolor=\darkgreen,fillstyle=solid,fillcolor=\darkgreen](33.5,-3){1}
\rput(35.5,-7){$\textcolor{\darkgreen}{\mathbf{t}+\bZe}$}
\pscircle[linecolor=\darkgreen,fillstyle=solid,fillcolor=\darkgreen](33.5,41){1}
\rput(34,36.5){$\textcolor{\darkgreen}{\bYe}$}
\psline[linecolor=\darkgreen,linearc=20,arrows=->,linewidth=0.8pt](33.5,-3)(60,19)(34,40.5)
\rput(56,18){$\textcolor{\darkgreen}{\Mod_c}$}

\pscircle[linecolor=blue](32,48){2}
\rput(33,54){$\textcolor{blue}{Q_{\Lambda_f}(\bYe)}$}

\psline[linecolor=blue,linearc=20,arrows=->,linewidth=0.8pt](32,48)(10,26)(31.5,3.5)
\rput(12,18){$\textcolor{blue}{\Mod_c}$}

}

}
\end{pspicture*}
\end{center}
\caption{An illustration of the coset nearest neighbor decoding process. The lattice point $\bt$ was transmitted. The output of the induced channel when the mod-$\Lambda$ transmission scheme is applied is \mbox{$\bYe=\left[\bt+\bZe\right]\bmod \gamma\ZZ^n$}. The decoder quantizes $\bYe$ to the nearest lattice point in $\Lambda$ and reduces the quantized output modulo $\gamma\ZZ^n$.} \label{fig:cosetdecoder}
\end{figure}

The next theorem, proved in Section~\ref{sec:noshaping}, shows that this is indeed the case. More specifically, it shows that the mod-$\Lambda$ scheme with nested lattice codes where $\Lambda_c=\sqrt{12\Tsnr}\ZZ^n$ can attain any rate smaller than $\tfrac{1}{2}\log(1+\Tsnr)-\tfrac{1}{2}\log(2\pi e/12)$ with a vanishing error probability, if $p$ is large. For finite $p$, an explicit upper bound on the additional loss is also specified.  Let $\cube\triangleq[-1/2,1/2)^n$ denote the unit cube centered at the origin.

\begin{theorem}
Consider an additive noise channel $Y=X+N$, where $N$ is an i.i.d. noise process with unit variance and the input is subject to the power constraint $\tfrac{1}{n}\mathbb{E}\|\bX^2\|^2\leq\Tsnr$. Let
\begin{align}
\Gamma(p,\Tsnr)\triangleq\log\left(1+\sqrt{\frac{3\Tsnr}{p^2}} \right).\nonumber
\end{align}
For any
\begin{align}
R<\frac{1}{2}\log(1+\Tsnr)-\frac{1}{2}\log\left(\frac{2\pi e}{12}\right)-\Gamma(p,\Tsnr),\nonumber
\end{align}
there exists a sequence of nested lattice codebooks $\m{L}^{(n)}=\Lambda^{(n)}_f\cap\sqrt{12\Tsnr}\cdot\cube$ with rate $R$, where $\Lambda^{(n)}_f$ is a sequence of scaled $p$-ary Construction A lattices, that attains a vanishing error probability under the mod-$\Lambda$ scheme.
\label{thm:cubic}
\end{theorem}

Note that $\Gamma(p,\Tsnr)\to 0$ as $p\to\infty$, and the gap to capacity is in this case just the standard shaping loss of $\tfrac{1}{2}\log(2\pi e/12)$. We further note that for any $\epsilon>0$ the choice
\begin{align}
\log{p}>\frac{1}{2}\log(\Tsnr)+\frac{1}{2}\log(3)-\frac{1}{2}\log\left(2^{\epsilon}-1\right),\label{pvalue}
\end{align}
guarantees that $\Gamma(p,\Tsnr)<\epsilon$.

\section{An Ensemble for Nested Lattice Chains}
\label{sec:ensemble}
Previous proofs for the existence of capacity achieving pairs of nested lattices used random Construction A, introduced by Loeliger~\cite{loeliger97}, for creating a fine lattice, and then rotated it using a lattice that is good for covering. Here, we take a different approach that is a direct extension of the original approach of~\cite{zs98} to creating nested binary linear codes. We use random Construction A to simultaneously create both the fine and the coarse lattice. Namely, we randomly draw a linear code and lift it to the Euclidean space in order to obtain the fine lattice. The coarse lattice is obtained by lifting a subcode from the same linear code to the Euclidean space. 

Let $\bG\in\ZZ_p^{k\times n}$. For any natural number $m\leq k$ we denote by $\bG_{m}$ the $m\times n$ matrix obtained by taking only the first $m$ rows of $\bG$. The linear code $\m{C}\left(\bG_{m}\right)$ and the lattice $\Lambda\left(\bG_{m}\right)$ are defined as in Definition~\ref{def:constA}.

Clearly, for any $\bG\in\ZZ_p^{k\times n}$, and $k_1< k$ we have that $\Lambda\left(\bG_{k_1}\right)\subset\Lambda\left(\bG_{k}\right)$. Thus, we can define an ensemble of nested lattice pairs by fixing $k_1,k,n,p$ and drawing the entries of the matrix $\bG$ according to the i.i.d. uniform distribution on $\ZZ_p$.

\begin{remark}
We have chosen to specify our ensemble in terms of the linear codes' generating matrices
\begin{align}
\bG=\left[
        \begin{array}{c}
          \bG_{k_1} \\
          ---\\
          \bG' \\
        \end{array}
      \right].\nonumber
\end{align}
We could have equally defined the ensemble using the linear codes' parity check matrices
\begin{align}
\bH_{n-k_1}=\left[
        \begin{array}{c}
          \bH_{n-k} \\
          ---\\
          \bH' \\
        \end{array}
      \right],\nonumber
\end{align}
as done in~\cite{zs98,nestedLattices} for ensembles of nested binary linear codes.
\end{remark}

More generally, for any choice of $L$ natural numbers $k_1<k_2<\cdots<k_L<n$ we can define a similar ensemble for a chain of $L$ nested lattices
\begin{align}
\Lambda\left(\bG_{k_1}\right)\subset\Lambda\left(\bG_{k_2}\right)\subset\cdots\subset\Lambda\left(\bG_{k_L}\right).\nonumber
\end{align}
We now formally define the ensemble of nested lattices we will use in our existence proof

\begin{definition}[Ensemble of nested lattice chains]
Let $n$ be a natural number and $0<\alpha_1<\ldots<\alpha_L<\log{n}$. An $(n,\alpha_1,\ldots,\alpha_L)$ ensemble for a chain of $L$ nested lattices is defined as follows. Let $\gamma=2\sqrt{n}$, and $p=\xi n^{\frac{3}{2}}$, where $\xi$ is chosen as the smallest number in the interval \mbox{$[1,2)$} such that $p$ is prime. Let
\begin{align}
k_\ell\triangleq\frac{n}{2\log p}\left( \log\left(\frac{4}{V_n^{\frac{2}{n}}} \right)+\alpha_\ell\right), \ \ \ \ell=1,\ldots,L.\nonumber
\end{align}
Draw a matrix $\bG\in\ZZ_p^{k_L\times n}$ whose entries are i.i.d. uniformly distributed over $\ZZ_p$, and construct the chain $\Lambda_1\subset\cdots\subset\Lambda_L$ by setting
\begin{align}
\Lambda_\ell=\gamma\Lambda\left(\bG_{k_\ell}\right), \ \ \ \ell=1,\ldots,L.\nonumber
\end{align}
\end{definition}

Theorem~\ref{thm:main} will follow as a straightforward corollary of the following result.
\begin{theorem}
Let $n$ be a large natural number, $\gamma=2\sqrt{n}$, and $p=\xi n^{\frac{3}{2}}$, where $\xi$ is chosen as the smallest number in the interval \mbox{$[1,2)$} such that $p$ is prime. Further, let $0<\alpha<\log{n}$ and set
\begin{align}
k\triangleq\frac{n}{2\log p}\left( \log\left(\frac{4}{V_n^{\frac{2}{n}}} \right)+\alpha\right).\nonumber
\end{align}
Let $\bG\in\ZZ_p^{k\times n}$ be a random matrix whose entries are i.i.d. uniformly distributed over $\ZZ_p$. Then for any $\epsilon, \delta>0$, there is an integer $N(\epsilon,\delta)$ such that for any $n>N(\epsilon,\delta)$
\begin{enumerate}
\item $\Pr\left(\rank(\bG)<k\right)<\epsilon$;\label{fullrank}
\item $\Pr\left(\sigma^2\left(\gamma\Lambda(\bG)\right)>(1+\delta)2^{-\alpha}\right)<\epsilon$;\label{goodNSM}
\item For any additive semi norm-ergodic noise $\bZ$ with effective variance $\sigma_\bZ^2=\frac{1}{n}\mathbb{E}\|\bZ\|^2\leq (1-\delta)2^{-\alpha}$ and any $\bt\in\gamma\Lambda(\bG)$, the following holds
    \begin{align}
    \Pr\left(\Pr\left(Q_{\gamma\Lambda(\bG)}(\bt+\bZ)\neq \bt \mid \bG\right)>\delta \right)<\epsilon.\nonumber
    \end{align}\label{goodForCoding}
\end{enumerate}
\label{thm:randomConstA}
\end{theorem}

The proof of Theorem~\ref{thm:randomConstA} is given in Section~\ref{sec:proof}. We now prove Theorem~\ref{thm:main}.

\begin{proof}[Proof of Theorem~\ref{thm:main}]
Let $\Lambda_1\subset\cdots\subset\Lambda_L$ be a random lattice chain drawn from the $(n,\alpha_1,\ldots,\alpha_L)$ ensemble and let $\bG_{k_1},\ldots,\bG_{k_L}$ be the corresponding linear codes generating matrices. Set $\epsilon,\delta>0$ and for all $\ell=1,\ldots,L$ define the following error events
\begin{enumerate}
\item $E_{1\ell}$ is the event that $\rank\left(\bG_{k_\ell}\right)<k_{\ell}$;
\item $E_{2\ell}$ is the event that $\sigma^2\left(\Lambda_\ell\right)>(1+\delta)2^{-\alpha_\ell}$
\item $E_{3\ell}$ is the event that $\Pr\left(Q_{\Lambda_\ell}(\bt+\bZ)\neq \bt\right)>\delta$ for some $\bt\in\Lambda_\ell$ and some additive semi norm-ergodic noise $\bZ$ with effective variance $\sigma_\bZ^2=\frac{1}{n}\mathbb{E}\|\bZ\|^2\leq (1-\delta)2^{-\alpha}$
\end{enumerate}
Further, let
\begin{align}
E\triangleq\bigcup_{i=1}^3\bigcup_{\ell=1}^L E_{i\ell}.\nonumber
\end{align}
By the union bound we have that
\begin{align}
\Pr(E)&\leq\sum_{i=1}^3 \Pr\left(\bigcup_{\ell=1}^L E_{i\ell}\right)\nonumber\\
&=\Pr(E_{1L})+\Pr\left(\bigcup_{\ell=1}^L E_{2\ell}\right)+\Pr\left(\bigcup_{\ell=1}^L E_{3\ell}\right)\label{eq:rankevent}\\
&\leq\Pr(E_{1L})+\sum_{\ell=1}^L\Pr(E_{2\ell})+\sum_{\ell=1}^L\Pr(E_{3\ell})\nonumber,
\end{align}
where~\eqref{eq:rankevent} follows from the fact that if $\bG_{k_L}$ has full row rank over $\ZZ_p$, then so are all the matrices obtained by removing rows from it. Further, since $\bG_{k_\ell}$ satisfies the conditions of Theorem~\ref{thm:randomConstA} for all $\ell=1,\ldots,L$, then for $n$ large enough $\Pr(E_{1L})<\epsilon$, $\Pr(E_{2\ell})<\epsilon$ and $\Pr(E_{3\ell})<\epsilon$. Thus, $\Pr(E)<(2L+1)\epsilon$, and consequently $\Pr(\overline{E})>1-(2L+1)\epsilon$, where $\overline{E}$ is the event that $E$ did not occur. Since this holds for any $\epsilon>0$, we have that for $n$ large enough, the event $E$ does not occur for almost all members in the ensemble.

We now show that any member in the ensemble for which $E$ does not occur, has lattices $\Lambda_1\subset\cdots\subset\Lambda_L$ whose volumes are close to $2\pi e 2^{-\alpha_\ell}$, whose normalized second moments are close to $1/2\pi e$, and whose error probabilities are small as long as the volume-to-noise ratio is greater than $1$.

In particular, if $E$ does not occur, all matrices $\bG_{k_1},\ldots,\bG_{k_L}$ have full row rank over $\ZZ_p$. In this case, we have that $V(\Lambda_\ell)=\gamma^n p^{-k_{\ell}}$ and therefore
\begin{align}
V^{\tfrac{2}{n}}(\Lambda_\ell)&=\gamma^2 p^{-\frac{2k_{\ell}}{n}}\nonumber\\
&=4n\frac{V_n^{\tfrac{2}{n}}}{4}2^{-\alpha_\ell}.\label{normalizedVolume}
\end{align}
Since $\lim_{n\to\infty} nV_n^{\tfrac{2}{n}}=2\pi e$, we have that $\lim_{n\to\infty}V^{\tfrac{2}{n}}(\Lambda_\ell)=2\pi e 2^{-\alpha_\ell}$, as desired. In particular, for $n$ large enough
\begin{align}
(1-\delta/2)2\pi e 2^{-\alpha_\ell}<V^{\tfrac{2}{n}}(\Lambda_\ell)<2\pi e 2^{-\alpha_\ell}.\label{VolBounds}
\end{align}
Now, by Theorem~\ref{thm:randomConstA}
\begin{align}
G(\Lambda_\ell)&=\frac{\sigma^2(\Lambda_\ell)}{V^{\tfrac{2}{n}}(\Lambda_\ell)}\nonumber\\
&\leq \frac{(1+\delta)2^{-\alpha_\ell}}{(1-\delta/2)2\pi e 2^{-\alpha_\ell}}\nonumber\\
&=(1+\delta')\frac{1}{2\pi e},\nonumber
\end{align}
where $\delta'=(1+\delta)/(1-\delta/2)$ can be made as small as desired by increasing $n$. Thus, the sequence $\Lambda^{(n)}_\ell$ is good for MSE quantization.

In addition (again by Theorem~\ref{thm:randomConstA}, part \ref{goodForCoding}, and~\eqref{VolBounds}), we have that for any semi norm-ergodic noise $\bZ_\ell$ with effective variance $\sigma^2_{\bZ_\ell}\leq (1-\delta)2^{-\alpha_\ell}$, the probability of error in nearest neighbor decoding is smaller than $\delta$. Thus, for $n$ large enough we have that as long as the ratio $V^{\tfrac{2}{n}}(\Lambda_\ell)/(2\pi e \sigma^2_{\bZ_\ell})$ is greater than $(1-\delta/2)/(1-\delta)$, the error probability in decoding a point from $\Lambda_\ell$ in the presence of additive noise $\bZ_{\ell}$ is smaller than $\delta$. Thus, the sequence $\Lambda^{(n)}_\ell$ is good for coding.
\end{proof}

\section{Proof of Theorem~\ref{thm:randomConstA}}
\label{sec:proof}

Before going into the proof we need to introduce some more notation. Denote the operation of reducing each component of $\bx\in\RR^n$ modulo $\gamma$ by $\bx^*\triangleq[\bx]\bmod \gamma\ZZ^n$. If $\mathcal{S}$ is a set of points in $\RR^n$, $\mathcal{S}^*$ is the set obtained by reducing all points in $\mathcal{S}$ modulo $\gamma \ZZ^n$. If $\mathcal{S}$ and $\mathcal{T}$ are sets, $\mathcal{S}+\mathcal{T}$ is their Minkowski sum.
In the sequel, we use the following lemma, which follows from simple geometric arguments and is illustrated in Figure~\ref{fig:spheregrid}.

\begin{figure}[h]
\psset{unit=.8mm}
\begin{center}
\begin{pspicture}(5,5)(95,95)

\rput(25,35){\input{cube}}
\rput(25,45){\input{cube}}
\rput(25,55){\input{cube}}
\rput(25,65){\input{cube}}

\rput(35,25){\input{cube}}
\rput(35,35){\input{cube}}
\rput(35,45){\input{cube}}
\rput(35,55){\input{cube}}
\rput(35,65){\input{cube}}
\rput(35,75){\input{cube}}

\rput(45,25){\input{cube}}
\rput(45,35){\input{cube}}
\rput(45,45){\input{cube}}
\rput(45,55){\input{cube}}
\rput(45,65){\input{cube}}
\rput(45,75){\input{cube}}

\rput(55,25){\input{cube}}
\rput(55,35){\input{cube}}
\rput(55,45){\input{cube}}
\rput(55,55){\input{cube}}
\rput(55,65){\input{cube}}
\rput(55,75){\input{cube}}

\rput(65,25){\input{cube}}
\rput(65,35){\input{cube}}
\rput(65,45){\input{cube}}
\rput(65,55){\input{cube}}
\rput(65,65){\input{cube}}
\rput(65,75){\input{cube}}

\rput(75,35){\input{cube}}
\rput(75,45){\input{cube}}
\rput(75,55){\input{cube}}
\rput(75,65){\input{cube}}

\rput(5,45){\psline[linewidth=0.3pt,linecolor=black](0,-40)(0,50)}
\rput(15,45){\psline[linewidth=0.3pt,linecolor=black](0,-40)(0,50)}
\rput(25,45){\psline[linewidth=0.3pt,linecolor=black](0,-40)(0,50)}
\rput(35,45){\psline[linewidth=0.3pt,linecolor=black](0,-40)(0,50)}
\rput(45,45){\psline[linewidth=0.3pt,linecolor=black](0,-40)(0,50)}
\rput(55,45){\psline[linewidth=0.3pt,linecolor=black](0,-40)(0,50)}
\rput(65,45){\psline[linewidth=0.3pt,linecolor=black](0,-40)(0,50)}
\rput(75,45){\psline[linewidth=0.3pt,linecolor=black](0,-40)(0,50)}
\rput(85,45){\psline[linewidth=0.3pt,linecolor=black](0,-40)(0,50)}
\rput(95,45){\psline[linewidth=0.3pt,linecolor=black](0,-40)(0,50)}

\rput(45,5){\input{hline}}
\rput(45,15){\input{hline}}
\rput(45,25){\input{hline}}
\rput(45,35){\input{hline}}
\rput(45,45){\input{hline}}
\rput(45,55){\input{hline}}
\rput(45,65){\input{hline}}
\rput(45,75){\input{hline}}
\rput(45,85){\input{hline}}
\rput(45,95){\input{hline}}

\pscircle[linecolor=black,linewidth=1.1](50,50){30}
\rput(50,47){${\mathbf{s}}$}
\pscircle[linecolor=white](50,50){1}


\pscircle[linecolor=black,linestyle=dashed,dash=5pt 5pt](50,50){22.929}

\pscircle[linecolor=black,linestyle=dashed,dash=4pt 5pt](50,50){37.071}

\end{pspicture}
\end{center}
\caption{An illustration of Lemma~\ref{lem:gridsphere}. The solid circle is the boundary of $\mathcal{B}(\bs,r)$, and the points inside the small bright circles are the members of the set $\ZZ^n\cap\mathcal{B}(\bs,r)$. The set $\mathcal{}=\ZZ^n\cap\mathcal{B}(\bs,r)+\cube$ is the shaded area, and as the lemma indicates, it contains $\mathcal{B}(\bs,r-\frac{\sqrt{n}}{2})$ and is contained in $\mathcal{B}(\bs,r+\frac{\sqrt{n}}{2})$, whose boundaries are plotted in dashed circles.} \label{fig:spheregrid}
\end{figure}

%

\begin{lemma}
\label{lem:gridsphere}
For any $\bs\in\RR^n$ and $r>0$ ,the number of points of $\ZZ^n$ inside $\mathcal{B}(\bs,r)$ can be bounded as
\begin{align}
\left(\max\left\{r-\frac{\sqrt{n}}{2},0\right\}\right)^n V_n\leq \left|\ZZ^n\cap\mathcal{B}(\bs,r)\right|\leq \left(r+\frac{\sqrt{n}}{2}\right)^n V_n\nonumber
\end{align}
\end{lemma}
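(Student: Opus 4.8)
\noindent\textit{Proof approach.} The plan is to turn the lattice-point count into a volume computation by exploiting the fact that the translates $\{\bt+\cube : \bt\in\ZZ^n\}$ tile $\RR^n$: they are pairwise disjoint, each has volume $1$, and their union is all of $\RR^n$. Consequently, if we set $\mathcal{A}\triangleq\left(\ZZ^n\cap\mathcal{B}(\bs,r)\right)+\cube$, then $\left|\ZZ^n\cap\mathcal{B}(\bs,r)\right|=\Vol(\mathcal{A})$. The lemma will then follow by sandwiching $\mathcal{A}$ between two concentric balls about $\bs$ and comparing volumes, using that $\Vol(\mathcal{B}(\bs,\rho))=V_n\rho^n$.

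For the upper bound, I would first note that every $\bx\in\cube$ satisfies $\|\bx\|\leq\sqrt{n}/2$, since each coordinate lies in $[-1/2,1/2)$. Hence for $\bt\in\ZZ^n\cap\mathcal{B}(\bs,r)$ and $\bx\in\bt+\cube$ the triangle inequality gives $\|\bx-\bs\|\leq\|\bx-\bt\|+\|\bt-\bs\|\leq\sqrt{n}/2+r$, so $\mathcal{A}\subseteq\mathcal{B}(\bs,r+\sqrt{n}/2)$. Taking volumes yields $\left|\ZZ^n\cap\mathcal{B}(\bs,r)\right|=\Vol(\mathcal{A})\leq V_n\left(r+\sqrt{n}/2\right)^n$.

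For the lower bound, if $r\leq\sqrt{n}/2$ the left-hand side of the claimed inequality is $0$ and there is nothing to prove. Otherwise, I would show $\mathcal{B}(\bs,r-\sqrt{n}/2)\subseteq\mathcal{A}$: given $\bx$ in this smaller ball, let $\bt\in\ZZ^n$ be obtained by rounding each coordinate of $\bx$ (so that $\bx\in\bt+\cube$); then $\|\bt-\bx\|\leq\sqrt{n}/2$ and therefore $\|\bt-\bs\|\leq\|\bt-\bx\|+\|\bx-\bs\|\leq\sqrt{n}/2+(r-\sqrt{n}/2)=r$, i.e.\ $\bt\in\ZZ^n\cap\mathcal{B}(\bs,r)$, so $\bx\in\mathcal{A}$. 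Comparing volumes gives $V_n\left(r-\sqrt{n}/2\right)^n\leq\Vol(\mathcal{A})=\left|\ZZ^n\cap\mathcal{B}(\bs,r)\right|$, which is the desired bound.

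There is no genuinely hard step here; the only points needing mild care are using the half-open cube $\cube$ so that the translates tile $\RR^n$ exactly (which is what makes the point count equal to a volume), and isolating the degenerate regime $r\leq\sqrt{n}/2$ in the lower bound so that the $\max\{\cdot,0\}$ is handled correctly.
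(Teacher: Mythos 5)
Your proposal is correct and follows essentially the same route as the paper: identify the point count with the volume of $\left(\ZZ^n\cap\mathcal{B}(\bs,r)\right)+\cube$ and sandwich that set between $\mathcal{B}(\bs,r-\frac{\sqrt{n}}{2})$ and $\mathcal{B}(\bs,r+\frac{\sqrt{n}}{2})$ via the triangle inequality. Your explicit handling of the degenerate case $r\leq\frac{\sqrt{n}}{2}$ is a small tidiness the paper leaves implicit in the $\max\{\cdot,0\}$, but the argument is the same.
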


\begin{proof}
Let $\mathcal{S}\triangleq\left(\ZZ^n\cap\mathcal{B}(\bs,r)\right)+\cube$, and note that $\left|\ZZ^n\cap\mathcal{B}(\bs,r)\right|=\Vol(\mathcal{S})$.
We have
\begin{align}
\mathcal{B}\left(\bs,r-\frac{\sqrt{n}}{2}\right)\subseteq \mathcal{S}.\label{inclusion}
\end{align}
To see this, note that any $\bx\in\mathcal{B}\left(\bs,r-\frac{\sqrt{n}}{2}\right)$ lies inside \mbox{$\ba+\cube$} for some $\ba\in\ZZ^n$, and for this $\ba$ the inequality \mbox{$\|\ba-\bx\|\leq\sqrt{n}/2$} holds. Applying the triangle inequality gives
\begin{align}
\|\ba-\bs\|=\|(\ba-\bx)+(\bx-\bs)\|\leq\|(\ba-\bx)\|+\|(\bx-\bs)\|\leq r.\nonumber
\end{align}
Thus, $\ba\in\left(\ZZ^n\cap \m{B}(\bs,r)\right)$, and hence $\bx\in\m{S}$, which implies~\eqref{inclusion}.
On the other hand,
\begin{align}
\mathcal{S}&\subseteq \mathcal{B}(\bs,r)+\cube\nonumber\\
&\subseteq \mathcal{B}(\bs,r)+\mathcal{B}\left(0,\frac{\sqrt{n}}{2}\right)\nonumber\\
&=\mathcal{B}\left(\bs,r+\frac{\sqrt{n}}{2}\right).\nonumber
\end{align}
Thus,
\begin{align}
\Vol\left(\mathcal{B}\left(\bs,r-\frac{\sqrt{n}}{2}\right)\right)\leq \Vol\left(\mathcal{S}\right)\leq\Vol\left(\mathcal{B}\left(\bs,r+\frac{\sqrt{n}}{2}\right)\right).\nonumber
\end{align}
\end{proof}

\subsection{The matrix $\bG$ is full rank with high probability}
The probability that $\bG$ is not full-rank was bounded in~\cite{elz05}. We repeat the proof for completeness. The matrix $\bG$ is not full rank if and only if there exist some nonzero vector $\bw\in\ZZ_p^k$ such that $\bw^T\bG=\mathbf{0}$. Thus,
\begin{align}
\Pr\left(\rank(\bG_f)<k\right)&=\Pr\left(\bigcup_{\bw\in\ZZ_p^k\setminus\mathbf{0}} (\bw^T\bG=\mathbf{0})\right)\nonumber\\
&\leq \sum_{\bw\in\ZZ_p^k\setminus\mathbf{0}}\Pr(\bw^T\bG=\mathbf{0})\label{UBrank}\\
&=(p^k-1)p^{-n}\label{rankProb}\\
&<p^{-(n-k)},\nonumber
\end{align}
where~\eqref{UBrank} follows from the union bound, and~\eqref{rankProb} since $\bw^T\bG$ is uniformly distributed over $\ZZ_p^n$ for any $\bw\neq \mathbf{0}$.

By our definition of $p$ and $k$, and using the fact that $V_n^{\frac{2}{n}}\geq \tfrac{4}{n}$ for all $n$, we have
\begin{align}
k&\leq \frac{n}{2\log \xi+3\log{n}}\left( \log{n}+\alpha\right)\nonumber\\
&\leq n\left(\frac{1}{3}+\frac{\alpha}{3\log{n}}\right)\nonumber\\
&<\frac{2n}{3},\nonumber
\end{align}
where the last inequality follows from the assumption $\alpha<\log{n}$. Thus, $\Pr\left(\rank(\bG_f)<k\right)<p^{-\tfrac{n}{3}}$, and can therefore be made smaller than any $\epsilon>0$, by taking $n$ large enough.

\subsection{Goodness for MSE Quantization}
\label{subsec:MMSE}
In this subsection we show that for any $\delta,\epsilon>0$ and $n$ large enough
\begin{align}
\Pr\left(\sigma^2\left(\gamma\Lambda(\bG)\right)>(1+\delta)2^{-\alpha}\right)<\epsilon.\nonumber
\end{align}
Our proof follows the derivation from~\cite{yk07}, which dealt with the NSM of Construction A lattices with finite $p$. In our case $p$ grows with the lattice dimension, and the derivation can be significantly simplified.

We begin by bounding the average MSE distortion attained by the random lattice $\gamma\Lambda(\bG)$ for a source uniformly distributed over $\gamma[0,1)^n$. As we shall see, this average MSE distortion is equal to $\mathbb{E}(\sigma^2(\gamma\Lambda(\bG)))$. We then apply Markov's inequality to show that this implies that almost all lattices in the ensemble have a small $\sigma^2(\gamma\Lambda(\bG))$.

For any (fixed) $\bx\in\RR^n$, define
\begin{align}
d(\bx,\gamma\Lambda(\bG))&\triangleq\frac{1}{n}\min_{\lambda\in\gamma\Lambda(\bG)}\|\bx-\lambda\|^2\nonumber\\
&=\frac{1}{n}\min_{\ba\in\ZZ^n,\bc\in\mathcal{C}(\bG)}\|\bx-\gamma p^{-1}\bc-\gamma\ba\|^2\nonumber\\
&=\frac{1}{n}\min_{\bc\in\mathcal{C}(\bG)}\|(\bx-\gamma p^{-1}\bc)^*\|^2.\nonumber
\end{align}
Recall that $\gamma\ZZ^n\subset\gamma\Lambda(\bG)$ and therefore $d\left(\bx,\gamma\Lambda(\bG)\right)\leq \gamma^2/4$ for any $\bx\in\RR^n$, regardless of $\bG$.

Let $0<\rho<\alpha$. For any $\bw\in\ZZ_p^{k}\setminus{\mathbf{0}}$, define the random vector \mbox{$\bC(\bw)=\left[\bw^T\bG\right]\bmod p$}, and note that $\bC(\bw)$ is uniformly distributed over $\ZZ_p^n$. For all \mbox{$\bw\in\ZZ_p^{k}\setminus{\mathbf{0}}$} and $\bx\in\RR^n$, we have
\begin{align}
\varepsilon&\triangleq\Pr\left(\frac{1}{n}\left\|\left(\bx-\gamma p^{-1}\bC(\bw)\right)^*\right\|^2\leq 2^{-\rho}\right)\nonumber\\
&=p^{-n}\left|(\gamma p^{-1}\ZZ_p^n)\bigcap\mathcal{B}^*(\bx,\sqrt{n2^{-\rho}}) \right|\nonumber\\
&=p^{-n}\left|(\gamma p^{-1}\ZZ^n)\bigcap\mathcal{B}(\bx,\sqrt{n2^{-\rho}}) \right|\label{modisok}\\
&\geq p^{-n}V_n\left(p\gamma^{-1}\sqrt{n2^{-\rho}}-\frac{\sqrt{n}}{2}\right)^n\label{coveringPeBoundTmp}\\
&=V_n (\gamma^{-2}n2^{-\rho})^{\frac{n}{2}}\left(1-\frac{\gamma\sqrt{2^{\rho}}}{2p}\right)^n\nonumber\\
&=V_n \left(\frac{1}{4}\right)^{\frac{n}{2}}2^{-\tfrac{\rho n}{2}}\left(1-\frac{\sqrt{n2^{\rho}}}{p}\right)^n,\label{coveringPeBound}
\end{align}
where~\eqref{modisok} follows since $\gamma=2\sqrt{n}$, and hence, for any two distinct points $\bb_1,\bb_2\in\mathcal{B}(\bx,\sqrt{n2^{-\rho})})$ we have $\bb_1^*\neq\bb_2^*$ (that is, the ball $\mathcal{B}(\bx,\sqrt{n2^{-\rho})})$ is contained in a cube with side $\gamma$), and~\eqref{coveringPeBoundTmp} follows from Lemma~\ref{lem:gridsphere}. Substituting $p=\xi n^{\frac{3}{2}}$ and recalling that $\rho<\alpha<\log{n}$ gives
\begin{align}
\varepsilon&>2^{-\frac{n}{2}\left(\log\left(\frac{4}{V_n^{n/2}}\right)+\rho \right)}\left(1-\frac{2^{\tfrac{\rho}{2}}}{\xi n}\right)^n\nonumber\\
&>2^{-\frac{n}{2}\left(\log\left(\frac{4}{V_n^{n/2}}\right)+\rho \right)}\left(1-\frac{2^{\tfrac{\rho}{2}}}{n}\right)^n\nonumber\\
&>2^{-\frac{n}{2}\left(\log\left(\frac{4}{V_n^{n/2}}\right)+\rho \right)}\left(1-\frac{1}{\sqrt{n}}\right)^n\nonumber\\
&=2^{-\frac{n}{2}\left(\log\left(\frac{4}{V_n^{n/2}}\right)+\rho \right)}2^{n\log\left(1-\tfrac{1}{\sqrt{n}}\right)}\nonumber\\
&>2^{-\frac{n}{2}\left(\log\left(\frac{4}{V_n^{n/2}}\right)+\rho \right)}2^{-\frac{n\log(e)}{\sqrt{n}-1}},\nonumber
\end{align}
where we have used the inequality $\log(1-t)>-\tfrac{t}{1-t}\log(e)$ for $0<t<1$ in the last inequality. Thus, for any $n\geq 4$ we have
\begin{align}
\varepsilon>2^{-\frac{n}{2}\left(\log\left(\frac{4}{V_n^{n/2}}\right)+\rho +\frac{4\log(e)}{\sqrt{n}}\right)}.\label{coveringPeBound2}
\end{align}
Let $M\triangleq p^{k}-1$. Label each of the vectors $\bw\in\ZZ_p^{k}\setminus{\mathbf{0}}$ by an index \mbox{$i=1,\ldots,M$}, and refer to its corresponding codeword as $\bC_i$. Define the indicator random variable related to the point $\bx\in\RR^n$
\begin{align}
\chi_i=\begin{cases}
                   1 & \text{if }\frac{1}{n}\left|\left(\bx-\gamma p^{-1}\bC_i\right)^*\right|^2\leq 2^{-\rho} \\
                   0 & \text{otherwise}
                 \end{cases}.\nonumber
\end{align}
Since each $\chi_i$ occurs with probability $\varepsilon$, we have
\begin{align}
\Pr\bigg(\sum_{i=1}^M\chi_i=0\bigg)&=\Pr\left(\frac{1}{M}\sum_{i=1}^M\chi_i-\varepsilon=-\varepsilon\right)\nonumber\\
&\leq\Pr\left(\left|\frac{1}{M}\sum_{i=1}^M\chi_i-\varepsilon\right|\geq\varepsilon\right)\nonumber\\
&\leq\frac{\Var\left(\frac{1}{M}\sum_{i=1}^M\chi_i\right)}{\varepsilon^2},\label{chebyshev}
\end{align}
where the last inequality follows from Chebyshev's inequality. In order to further bound the variance term from~\eqref{chebyshev}, we note that $\bC(\bw_1)$ and $\bC(\bw_2)$ are statistically independent unless \mbox{$\bw_1=[a\bw_2]\bmod p$} for some $a\in\ZZ_p$. Therefore, each $\chi_i$ is statistically independent of all but $p$ different $\chi_j$'s. Thus,
\begin{align}
\Var\left(\frac{1}{M}\sum_{i=1}^M\chi_i\right)&=\frac{1}{M^2}\sum_{i=1}^M\sum_{j=1}^M\Cov(\chi_i,\chi_j)\nonumber\\
&\leq \frac{Mp\varepsilon}{M^2}.\nonumber
\end{align}
Substituting into~\eqref{chebyshev} and using~\eqref{coveringPeBound2}, we see that for any \mbox{$\bx\in\RR^n$}
\begin{align}
\Pr\bigg(d(\bx,&\gamma\Lambda(\bG))>2^{-\rho}\bigg)\leq \Pr\bigg(\sum_{i=1}^M\chi_i=0\bigg)\nonumber\\
&<\frac{p}{M\varepsilon}\nonumber\\
&<2n^{\frac{3}{2}}\frac{1}{p^{k}-1}2^{\frac{n}{2}\left(\log\left(\frac{4}{V_n^{n/2}}\right)+\rho +\frac{4\log(e)}{\sqrt{n}}\right)}\nonumber\\
&<4n^{\frac{3}{2}} p^{-k}2^{\frac{n}{2}\left(\log\left(\frac{4}{V_n^{n/2}}\right)+\rho +\frac{4\log(e)}{\sqrt{n}}\right)}\nonumber\\
&=4n^{\frac{3}{2}}2^{-\frac{n}{2}\left(\alpha-\rho -\frac{4\log(e)}{\sqrt{n}}\right)},\nonumber
\end{align}
where we have used
\begin{align}
p^{-k}=2^{-\frac{n}{2}\left(\log\left(\frac{4}{V_n^{n/2}}\right)+\alpha\right)}\nonumber
\end{align}
in the last equality.

It follows that for any distribution on $\bX$ we have
\begin{align}
&\mathbb{E}_{\bX,\bG}\left(d(\bX,\gamma\Lambda(\bG))\right)\nonumber\\
&\leq 2^{-\rho}\Pr\left(d(\bX,\gamma\Lambda(\bG))\leq 2^{-\rho}\right)\nonumber\\
&+\frac{\gamma^2}{4}\Pr\left(d(\bX,\gamma\Lambda(\bG))>2^{-\rho}\right)\nonumber\\
&\leq 2^{-\rho}\left(1+ 4n^{\frac{5}{2}}2^{-\frac{n}{2}\left(\alpha-\rho -\frac{4\log(e)}{\sqrt{n}}\right)}\right)\nonumber\\
&=2^{-\rho}\left(1+ 2^{-\frac{n}{2}\left(\alpha-\rho -\frac{4\log(e)}{\sqrt{n}}-\frac{5\log{n}}{n}-\frac{4}{n}\right)}\right).\nonumber
\end{align}
Thus, for any $0<\rho<\alpha$ the upper bound on the distortion averaged over $\bX$ and over the ensemble of lattices $\gamma\Lambda(\bG)$ becomes arbitrary close to $2^{-\rho}$ as $n$ increases.
Since this is true for all distributions on $\bX$, we may take $\bX\sim\Unif\left(\gamma[0,1)^n\right)$. Let $\bU$ be a random variable uniformly distributed over the Voronoi region $\CV_\bG$ of a lattice $\gamma\Lambda(\bG)$ randomly drawn from the ensemble. By construction, for any lattice $\gamma\Lambda(\bG)$ in the defined ensemble \mbox{$\left[\gamma p^{-1}\mathcal{C}(\bG)+\CV_\bG\right]^*=\gamma[0,1)^n$}. Moreover, reducing the set \mbox{$\gamma p^{-1}\mathcal{C}(\bG)+\CV_\bG$} modulo $\gamma\ZZ^n$ does not change its volume. Therefore,
\begin{align}
\mathbb{E}_{\bG}\left(\sigma^2(\gamma\Lambda(\bG))\right)=\mathbb{E}_{\bU,\bG}\left(\frac{1}{n}\|\bU\|^2\right)
=\mathbb{E}_{\bX,\bG}\left(d(\bX,\gamma\Lambda(\bG))\right).\nonumber
\end{align}
It follows that, for any $0<\rho<\alpha$,
\begin{align}
\mathbb{E}_{\bG}\left(\sigma^2(\gamma\Lambda(\bG))\right)\leq 2^{-\rho}\left(1+ 2^{-\frac{n}{2}\left(\alpha-\rho -\m{O}\left(\frac{1}{\sqrt{n}}\right)\right)}\right).\nonumber
\end{align}

Now, define the random variable $T\triangleq \sigma^2(\gamma\Lambda(\bG))-\tfrac{n}{n+2}2^{-\alpha}$. We show that the r.v. $T$ is non-negative, or equivalently, that for every $\bG$ in the ensemble
\begin{align}
\sigma^2(\gamma\Lambda(\bG))\geq\frac{n}{n+2}2^{-\alpha}.\label{momentLB}
\end{align}
To see this, note that $V(\gamma\Lambda(\bG))\geq\gamma^n p^{-k}$ for all $\bG$, with equality if and only if $\bG$ has full row rank. Thus, by~\eqref{normalizedVolume} we have $V^{\frac{2}{n}}(\gamma\Lambda(\bG))\geq n V_n^{\frac{2}{n}} 2^{-\alpha}$, which implies $r^2_{\text{eff}}(\gamma\Lambda(\bG))=V^{\frac{2}{n}}(\gamma\Lambda(\bG))/V_n^{\frac{2}{n}}\geq n 2^{-\alpha}$ by~\eqref{reffdef}. Using the isoperimetric inequality~\eqref{isoReff}, we get~\eqref{momentLB}.

Since $T$ is non-negative, we can apply Markov's inequality
\begin{align}
\Pr(T&>\delta 2^{-\alpha})\leq \frac{E(T)}{\delta}2^\alpha\nonumber\\
&=\frac{E(\sigma^2(\gamma\Lambda(\bG)))-\frac{n}{n+2}2^{-\alpha}}{\delta}2^\alpha\nonumber\\
&\leq\frac{2^{\alpha-\rho}\left(1+ 2^{-\frac{n}{2}\left(\alpha-\rho -\m{O}\left(\frac{1}{\sqrt{n}}\right)\right)}\right)-\frac{n}{n+2}}{\delta}\nonumber
\end{align}
Setting $\rho=\alpha-\log\left(\frac{n}{n+2}+\frac{\epsilon\delta}{2} \right)$ we get that for $n$ large enough $\Pr(T>\delta 2^{-\alpha})<\epsilon$, and therefore $\Pr\left(\sigma^2(\gamma\Lambda(\bG))>(1+\delta)2^{-\alpha}\right)<\epsilon$ as desired.

\subsection{Goodness for Coding}
\label{subsec:NN}

In this subsection we show that for any $\delta,\epsilon>0$, and any additive semi norm-ergodic noise $\bZ$ with effective variance $\sigma_\bZ^2=\frac{1}{n}\mathbb{E}\|\bZ\|^2\leq (1-\delta)2^{-\alpha}$, we have that
\begin{align}
\Pr\left(\Pr\left(Q_{\gamma\Lambda(\bG)}(\bt+\bZ)\neq \bt \mid \bG\right)>\delta \right)<\epsilon\nonumber
\end{align}
for any $\bt\in\gamma\Lambda(\bG)$, provided that $n$ is large enough.

For any $\bG$, we upper bound the error probability of the nearest neighbor decoder $Q_{\gamma\Lambda(\bG)}(\cdot)$ using the \emph{bounded distance} decoder, which is inferior. More precisely, we analyze the performance of a decoder that finds all lattice points of $\gamma\Lambda(\bG)$ within Euclidean distance $r$ from \mbox{$\bt+\bZ$}.
If there is a unique codeword in this set, this is the decoded codeword. Otherwise, the decoder declares an error. It is easy to see that regardless of the choice of $r$, the nearest neighbor decoder makes the correct decision whenever the bounded distance decoder does. Therefore, the error probability of the nearest neighbor decoder is upper bounded by that of the bounded distance decoder.

Given $\bG$, an error event $E$ for the bounded distance decoder can be expressed as the union of three events:
\begin{enumerate}
\item $E_1$ - The noise vector $\bZ$ falls outside a ball of radius $r$;
\item $E_2$ - The ball $\m{B}\left(\bt+\bZ,r\right)$ contains a point $\bt+\gamma\ba$ for some $\ba\in\ZZ^n\setminus\mathbf{0}$. This is equivalent to the event $\left(\m{B}\left(\bt+\bZ,r\right)\cap \left(\bt+\gamma\ZZ^n\right)\right)\setminus{\bt}\neq\emptyset$;
\item $E_3$ - The ball $\m{B}\left(\bt+\bZ,r\right)$ contains a point from $\gamma\Lambda(\bG)$ that does not belong to $\gamma\ZZ^n$. This is equivalent to the event $\m{B}\left(\bt+\bZ,r\right)\cap \left(\bt+\left(\gamma\Lambda(\bG)\setminus\gamma\ZZ^n\right)\right)\neq\emptyset$;
\end{enumerate}

Note that the first two events $E_1$ and $E_2$ depend only on $\bZ$, but not on $\bG$. Moreover,
\begin{align}
E_1=\left\{\bZ\notin\m{B}(0,r) \right\}\nonumber
\end{align}
and for $r<\gamma$ we can write
\begin{align}
E_2&=\left\{\left(\bZ+\m{B}(0,r)\right)\cap \left(\gamma\ZZ^n\setminus\mathbf{0}\right) \neq \emptyset \right\}\nonumber\\
&\subseteq \left\{\|\bZ\|+r\geq\gamma \right\}\nonumber\\
&=\left\{\bZ\notin\m{B}(0,\gamma-r) \right\}\nonumber.
\end{align}
In particular, if $\gamma>2r$ we have $E_2\subset E_1$. We choose $r^2=n\sqrt{1-\delta}2^{-\alpha}$ such that this condition indeed holds, and we can write
\begin{align}
\Pr(E\mid\bG)=\Pr(E_1\cup E_3\mid G)\leq \Pr(E_1)+\Pr(E_3\mid \bG).\label{PeG}
\end{align}
Thus,
\begin{align}
\Pr\left(\Pr(E\mid \bG)\geq \delta\right)&\leq \Pr\left(\Pr(E_1)+\Pr(E_3\mid \bG)\geq \delta\right)\nonumber\\
&=\Pr\left(\Pr(E_3\mid \bG)\geq \delta-\Pr(E_1)\right).\nonumber
\end{align}

Let $\delta'=\sqrt{\tfrac{1}{1-\delta}}-1>0$. We have for $\sigma^2_{\bZ}\leq(1-\delta)2^{-\alpha}$
\begin{align}
\Pr(E_1)&=\Pr\left(\bZ\notin\m{B}(0,r)\right)\nonumber\\
&=\Pr\left(\bZ\notin\m{B}\left(0,\sqrt{\frac{r^2}{n\sigma^2_\bZ}}\sqrt{n\sigma^2_{\bZ}}\right)\right)\nonumber\\
&\leq\Pr\left(\bZ\notin\m{B}\left(0,\sqrt{\frac{1}{\sqrt{1-\delta}}}\sqrt{n\sigma^2_{\bZ}}\right)\right)\nonumber\\
&=\Pr\left(\bZ\notin\m{B}\left(0,\sqrt{(1+\delta')n\sigma^2_{\bZ}}\right)\right)\nonumber.
\end{align}
Since $\bZ$ is semi norm-ergodic, it follows that $\Pr(E_1)<\delta/2$ for $n$ large enough.

Next, we turn to upper bounding $\Pr(E_3\mid\bG)$. Note that in contrast to $E_1$ and $E_2$, this event does depend on $\bG$. We therefore first show that $\mathbb{E}_\bG\left(\Pr(E_3|\bG)\right)$ is small, and then apply Markov's inequality to show that the probability of drawing a matrix $\bG$ for which $\Pr(E_3|\bG)>\delta/2$ is smaller than $\epsilon$.

Let $\Ind(\mathcal{A})$ be the indicator function of the event $\mathcal{A}$.
\begin{align}
\mathbb{E}_\bG&\left(\Pr(E_3|\bG)\right)=\mathbb{E}_{\bG}\left(\Pr\left(\left(\gamma\Lambda(\bG)\setminus\gamma\ZZ^n\right)\bigcap\mathcal{B}(\bZ,r)\neq\emptyset\right)\right)\nonumber\\
&=\mathbb{E}_{\bG}\mathbb{E}_{\bZ}\left(\Ind\left(\left(\gamma p^{-1}\mathcal{C}(\bG)\setminus{\mathbf{0}}\right)\bigcap\mathcal{B}^*(\bZ,r)\neq\emptyset\right) \ \big| \ \bG\right)\nonumber\\
&=\mathbb{E}_{\bZ}\mathbb{E}_{\bG}\left(\Ind\left(\left(\gamma p^{-1}\mathcal{C}(\bG)\setminus{\mathbf{0}}\right)\bigcap\mathcal{B}^*(\bZ,r)\neq\emptyset\right) \ \big| \ \bZ\right)\nonumber\\
&=\mathbb{E}_{\bZ}\Pr\left(\left(\gamma p^{-1}\mathcal{C}(\bG)\setminus{\mathbf{0}}\right)\bigcap\mathcal{B}^*(\bZ,r)\neq\emptyset \ \big| \ \bZ\right).
\end{align}
Since each codeword in $\mathcal{C}(\bG)\setminus{\mathbf{0}}$ is uniformly distributed over $\ZZ_p^n$, and there are less than $p^k$ such codewords (i.e., $p^k-1$), applying the union bound gives
\begin{align}
\mathbb{E}_\bG&\left(\Pr(E_3|\bG)\right)\leq \mathbb{E}_{\bZ}\left(p^{k-n}\cdot\left|\gamma p^{-1}\ZZ_p^n\bigcap\mathcal{B}^*(\bZ,r)\right| \ \bigg| \ \bZ \right)\nonumber\\
&\leq \mathbb{E}_{\bZ}\left(p^{k-n}\cdot\left|\gamma p^{-1}\ZZ^n\bigcap\mathcal{B}(\bZ,r)\right| \ \bigg| \ \bZ \right)\nonumber\\
&\leq p^{k-n}V_n\left(\frac{p}{\gamma}r+\frac{\sqrt{n}}{2}\right)^n\label{intersectionbound2}\\
&=p^k\gamma^{-n}V_n r^n\left(1+\frac{\gamma\sqrt{n}}{2p \ r}\right)^n\label{intersectionbound3}\\
&=\left(\frac{V_n^{\frac{2}{n}}}{\gamma^2 p^{-\frac{2k}{n}}}r^2\right)^{\frac{n}{2}}\left(1+\frac{1}{2p}\frac{\gamma 2^{\alpha/2}}{(1-\delta)^{1/4}}\right)^n\nonumber\\
&=\left(\frac{r^2}{n 2^{-\alpha}}\right)^{\frac{n}{2}}\left(1+\frac{1}{2p}\frac{\gamma 2^{\alpha/2}}{(1-\delta)^{1/4}}\right)^n\label{VolSubst}\\
&\leq (1-\delta)^{\frac{n}{4}}\left(1+\frac{ 2^{\alpha/2}(1-\delta)^{-1/4}}{n}\right)^n\nonumber\\
&\leq(1-\delta)^{\frac{n}{4}}e^{ 2^{\alpha/2}(1-\delta)^{-1/4}},\label{Vnbound}
\end{align}
where~\eqref{intersectionbound2} follows from Lemma~\ref{lem:gridsphere} and~\eqref{VolSubst} follows from~\eqref{normalizedVolume} and since $\gamma=2\sqrt{n}$. Now, by~\eqref{Vnbound} we have that $\mathbb{E}_\bG\left(\Pr(E_3|\bG)\right)<\delta\epsilon/2$ for $n$ large enough. Applying Markov's inequality gives that $\Pr\left(\Pr(E_3|\bG)>\delta/2 \right)<\epsilon$ as desired.

\section{Mixture Noise Is Semi Norm-Ergodic for MSE-Good Coarse Lattices}
\label{sec:mixture}

Our aim is to prove Theorem~\ref{thm:effball} that states that a mixture noise composed of semi norm-ergodic noise and a dither from a lattice that is good for MSE quantization, is semi norm-ergodic. First, we show that if the sequence $\Lambda^{(n)}$ is good for MSE quantization, i.e., its normalized second moment approaches $1/2\pi e$, then a sequence of random dithers uniformly distributed over $\CV^{(n)}$ is semi norm-ergodic. To that end, we first prove the following lemma, which is a simple extension of~\cite{ez02techreport}.

\vspace{1mm}

\begin{lemma}
\label{lem:effrad}
Let $\mathcal{S}\in\RR^n$ be a set of points with volume $V(\mathcal{S})$ and normalized second moment
\begin{align}
G(\mathcal{S})=\frac{1}{nV(\mathcal{S})}\frac{\int_{\mathcal{S}}\|\bx\|^2d\bx}{V(\mathcal{S})^{\frac{2}{n}}}.\nonumber
\end{align}
Let $r_{\text{eff}}$ be the radius of an $n$-dimensional ball with the same volume as $V(\mathcal{S})$, i.e., \mbox{$V(\mathcal{S})=V_n r_{\text{eff}}^n$}. For any $0<\epsilon<1$ define
\begin{align}
r_{\epsilon}\triangleq\sqrt{\frac{2\pi e G(\mathcal{S})-\frac{n}{n+2}(1-\epsilon)^{1+\frac{2}{n}}}{\epsilon}} r_{\text{eff}}.\nonumber
\end{align}
Then, the probability that a random variable \mbox{$\bU\sim\Unif(\mathcal{S})$} leaves a ball with radius $r_{\epsilon}$ is upper bounded by
\begin{align}
\Pr\left(\bU\notin \mathcal{B}(\mathbf{0},r_{\epsilon})\right)\leq\epsilon.\nonumber
\end{align}
\end{lemma}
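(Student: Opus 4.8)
The plan is to split the second moment integral over $\mathcal{S}$ into the contribution of points inside the ball $\mathcal{B}(\mathbf{0},r_\epsilon)$ and the contribution of points outside it, and to lower bound each contribution in a way that yields a bound on $\Pr(\bU\notin\mathcal{B}(\mathbf{0},r_\epsilon))$. The key observation is that for a fixed volume of "inside mass," the inside integral $\int_{\mathcal{S}\cap\mathcal{B}(\mathbf{0},r_\epsilon)}\|\bx\|^2 d\bx$ is minimized when that mass is packed into a ball around the origin, exactly as in the computation~\eqref{ballsecondmoment}. Writing $q\triangleq\Pr(\bU\notin\mathcal{B}(\mathbf{0},r_\epsilon))$, the outside portion of $\mathcal{S}$ has volume $qV(\mathcal{S})$ and every point in it has norm at least $r_\epsilon$, so it contributes at least $q V(\mathcal{S}) r_\epsilon^2$. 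The inside portion has volume $(1-q)V(\mathcal{S}) = (1-q)V_n r_{\text{eff}}^n$; since the smallest-second-moment set of this volume is a ball of radius $(1-q)^{1/n} r_{\text{eff}}$, by~\eqref{ballsecondmoment} its second moment per dimension is at least $\frac{\left((1-q)^{1/n} r_{\text{eff}}\right)^2}{n+2}$, so the inside integral is at least $\frac{n}{n+2}(1-q)^{1+2/n} r_{\text{eff}}^2 \cdot V(\mathcal{S})$.

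Next I would combine these two lower bounds. Dividing the identity $\int_{\mathcal{S}}\|\bx\|^2 d\bx = nV(\mathcal{S})^{1+2/n} G(\mathcal{S})$ by $V(\mathcal{S}) r_{\text{eff}}^2 = V(\mathcal{S}) V(\mathcal{S})^{2/n}/V_n^{2/n}$ gives
\begin{align}
n\, V_n^{2/n} G(\mathcal{S}) \;\geq\; q\, \frac{r_\epsilon^2}{r_{\text{eff}}^2} \;+\; \frac{n}{n+2}(1-q)^{1+\frac{2}{n}}.\nonumber
\end{align}
Using $nV_n^{2/n} < 2\pi e$ (the bound recalled in the Definition after~\eqref{ballsecondmoment}) and $(1-q)^{1+2/n}\geq(1-q)$ whenever $q$ is the quantity we are bounding — actually I would be slightly more careful here and instead note that if $q\geq\epsilon$ then $(1-q)^{1+2/n}\leq(1-\epsilon)^{1+2/n}$, so the displayed inequality would force $r_\epsilon^2/r_{\text{eff}}^2 \leq \left(2\pi e\, G(\mathcal{S}) - \frac{n}{n+2}(1-\epsilon)^{1+2/n}\right)/q \leq \left(2\pi e\, G(\mathcal{S}) - \frac{n}{n+2}(1-\epsilon)^{1+2/n}\right)/\epsilon$, which contradicts the definition of $r_\epsilon$. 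Hence $q<\epsilon$, proving the lemma.

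The main subtlety — not really an obstacle but the one place care is needed — is the direction of the monotonicity argument in $q$: we want to show $q<\epsilon$, so we argue by contradiction, assume $q\geq\epsilon$, and use that the function $q\mapsto q\,\frac{r_\epsilon^2}{r_{\text{eff}}^2} + \frac{n}{n+2}(1-q)^{1+2/n}$ is increasing in $q$ on the relevant range (its derivative is $\frac{r_\epsilon^2}{r_{\text{eff}}^2} - \frac{n+2}{n+2}(1-q)^{2/n}\cdot\frac{n}{n} = \frac{r_\epsilon^2}{r_{\text{eff}}^2}-(1-q)^{2/n} > 0$ once one checks $r_\epsilon > r_{\text{eff}}$, which holds because $2\pi e\,G(\mathcal{S})\geq n\,V_n^{2/n}\,\cdot\frac{2\pi e}{n V_n^{2/n}} $ is at least $\frac{n}{n+2}$ plus a positive gap — alternatively one simply plugs $q=\epsilon$ directly into the lower bound and reads off the contradiction without invoking monotonicity at all). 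I would present the cleaner version: assume $q\geq\epsilon$, substitute $(1-q)^{1+2/n}\geq 0$ trivially and keep $q\, r_\epsilon^2/r_{\text{eff}}^2\geq \epsilon\, r_\epsilon^2/r_{\text{eff}}^2$ — wait, that needs $(1-q)$ handled too — so in fact the honest route is: from the displayed inequality, $q \leq \left(2\pi e\,G(\mathcal{S}) - \frac{n}{n+2}(1-q)^{1+2/n}\right)\big/\,(r_\epsilon^2/r_{\text{eff}}^2)$, and since the right side is decreasing in $q$ while the left is increasing, there is a unique crossover; evaluating the definition of $r_\epsilon$ shows the crossover occurs exactly at $q=\epsilon$, giving $q\leq\epsilon$ with the needed strictness from $nV_n^{2/n}<2\pi e$ being strict.
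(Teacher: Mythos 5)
Your decomposition of the second-moment integral, and both lower bounds (the inside part via ``a ball centered at the origin minimizes the second moment among sets of a given volume,'' the outside part via volume times squared radius), are exactly those of the paper's proof. The one structural difference is that you split at the fixed radius $r_\epsilon$, whereas the paper splits at the $\epsilon$-quantile radius $\tilde{r}_\epsilon$ defined by $\Vol\left(\mathcal{S}\cap\mathcal{B}(\mathbf{0},\tilde{r}_\epsilon)\right)=(1-\epsilon)V(\mathcal{S})$. That choice is what makes the paper's proof close in one line: with exactly a fraction $\epsilon$ of the volume outside, the combined inequality reads $nV_n^{2/n}G(\mathcal{S})\geq \frac{n}{n+2}(1-\epsilon)^{1+2/n}+\epsilon\,\tilde{r}_\epsilon^2/r_{\text{eff}}^2$, which rearranges directly to $\tilde{r}_\epsilon\leq r_\epsilon$, and then $\Pr(\bU\notin\mathcal{B}(\mathbf{0},r_\epsilon))\leq\Pr(\bU\notin\mathcal{B}(\mathbf{0},\tilde{r}_\epsilon))=\epsilon$. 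Your version instead produces an inequality containing the unknown $q$ on both sides, and the step that extracts $q\leq\epsilon$ from it is where your write-up goes wrong. Concretely: (i) the right-hand side $\bigl(2\pi e\,G(\mathcal{S})-\frac{n}{n+2}(1-q)^{1+2/n}\bigr)\big/(r_\epsilon^2/r_{\text{eff}}^2)$ is \emph{increasing} in $q$, not decreasing, so the ``unique crossover'' argument as stated does not apply; (ii) ``plugging $q=\epsilon$ into the lower bound'' proves nothing, since your inequality holds for the actual $q$, not for an arbitrary substitute; (iii) the justification offered for $r_\epsilon>r_{\text{eff}}$ is not a coherent chain of inequalities, and this fact does not follow from $2\pi e\,G(\mathcal{S})>\frac{n}{n+2}$ alone.

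The gap is closable. What you actually need is that $f(t)=t\,r_\epsilon^2/r_{\text{eff}}^2+\frac{n}{n+2}(1-t)^{1+2/n}$ is increasing on $[\epsilon,1]$, i.e., that $r_\epsilon^2/r_{\text{eff}}^2\geq(1-t)^{2/n}$ there; since $f(\epsilon)=2\pi e\,G(\mathcal{S})$ by the definition of $r_\epsilon$, and your displayed inequality gives $f(q)\leq nV_n^{2/n}G(\mathcal{S})<f(\epsilon)$, monotonicity then forces $q<\epsilon$. The monotonicity itself follows from $G(\mathcal{S})\geq\frac{1}{n+2}V_n^{-2/n}>\frac{1}{2\pi e}$ (the ball's NSM exceeds $1/(2\pi e)$ in every finite dimension, as noted in the paper's preliminaries), which gives
\begin{align}
\frac{r_\epsilon^2}{r_{\text{eff}}^2}=\frac{2\pi e\,G(\mathcal{S})-\frac{n}{n+2}(1-\epsilon)^{1+\frac{2}{n}}}{\epsilon}
>\frac{1-\frac{n}{n+2}(1-\epsilon)^{1+\frac{2}{n}}}{\epsilon}\geq 1\geq(1-t)^{\frac{2}{n}},\nonumber
\end{align}
the middle step using that $\epsilon\mapsto\epsilon+\frac{n}{n+2}(1-\epsilon)^{1+2/n}$ increases to $1$ at $\epsilon=1$. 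None of this bookkeeping is needed if you split at the quantile radius as the paper does, so I would recommend that route.
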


\begin{proof}
Let $\tilde{r}_{\epsilon}$ be the radius of a ball that contains exactly a fraction of $1-\epsilon$ of the volume of $\mathcal{S}$, i.e.,
\begin{align}
\Vol\left(\mathcal{S}\bigcap\mathcal{B}(\mathbf{0},\tilde{r}_{\epsilon}) \right)=(1-\epsilon)V(\mathcal{S}).\nonumber
\end{align}
Clearly, $\Pr\left(\bU\notin \mathcal{B}(\mathbf{0},\tilde{r}_{\epsilon})\right)=\epsilon$. In order to establish the lemma we have to show that $\tilde{r}_{\epsilon}\leq r_{\epsilon}$.
To that end, we write
\begin{align}
n G(\mathcal{S}) V^{\frac{2}{n}}(\mathcal{S})&=\frac{1}{V(\mathcal{S})}\int_{\bx\in\mathcal{S}}\|\bx\|^2 d\bx \nonumber\\
&=\frac{1}{V(\mathcal{S})}\bigg(\int_{\bx\in\left(S\cap\mathcal{B}(\mathbf{0},\tilde{r}_{\epsilon})\right)}\|\bx\|^2 d\bx\nonumber\\
& \ \ \ \ \ \ \ \ \ \ + \int_{\bx\in\left(S\cap(\RR^n\setminus\mathcal{B}(\mathbf{0},\tilde{r}_{\epsilon}\right)}\|\bx\|^2 d\bx\bigg).\label{tmpIntegral}
\end{align}
The first integral in~\eqref{tmpIntegral} may be lower bounded by replacing its integration boundaries with an $n$-dimensional ball $\mathcal{B}(\mathbf{0},\rho_{\epsilon})$, where
\begin{align}
\rho_{\epsilon}^2=V_n^{-\frac{2}{n}}(1-\epsilon)^{\frac{2}{n}}V^{\frac{2}{n}}(\mathcal{S})
\end{align}
is chosen such that $V_n \rho_{\epsilon}^n=(1-\epsilon)V(\mathcal{S})$. Thus
\begin{align}
\int_{\bx\in\left(S\cap\mathcal{B}(\mathbf{0},\tilde{r}_{\epsilon}\right)}\|\bx\|^2 d\bx&\geq\int_{\bx\in\mathcal{B}(\mathbf{0},\rho_{\epsilon})}\|\bx\|^2 d\bx\nonumber\\
&=nV_n \rho_{\epsilon}^n\sigma^2\left({\mathcal{B}(\mathbf{0},\rho_{\epsilon})}\right)\nonumber\\
&=\frac{n}{n+2}V_n \rho_{\epsilon}^n \rho_{\epsilon}^2\label{secondmomentbound}\\
&=\frac{n}{n+2}\frac{V^{1+\frac{2}{n}}(\mathcal{S})(1-\epsilon)^{1+\frac{2}{n}}}{V_n^{\frac{2}{n}}}\nonumber\\
&=\frac{n}{n+2}V(\mathcal{S})(1-\epsilon)^{1+\frac{2}{n}}r_{\text{eff}}^2,\label{intbound1}
\end{align}
where we have used~\eqref{ballsecondmoment} to get~\eqref{secondmomentbound}.
The second integral in~\eqref{tmpIntegral} is over a set of points with volume $\epsilon V(\mathcal{S})$ which are all at distance greater than $\tilde{r}_{\epsilon}$ from the origin. Therefore, it can be bounded as
\begin{align}
\int_{\bx\in\left(S\cap(\RR^n\setminus\mathcal{B}(\mathbf{0},\tilde{r}_{\epsilon}))\right)}\|\bx\|^2d\bx\geq \epsilon V(\mathcal{S})\tilde{r}_{\epsilon}^2.\label{intbound2}
\end{align}
Substituting~\eqref{intbound1} and~\eqref{intbound2} into~\eqref{tmpIntegral} gives
\begin{align}
n G(\mathcal{S}) V^{\frac{2}{n}}(\mathcal{S})\geq\left(\frac{n}{n+2}(1-\epsilon)^{1+\frac{2}{n}}r_{\text{eff}}^2+ \epsilon \tilde{r}_{\epsilon}^2 \right).\label{repsilonbound}
\end{align}
Using the fact that $V^{\frac{2}{n}}(\mathcal{S})=V_n^{\frac{2}{n}}r_{\text{eff}}^2$,~\eqref{repsilonbound} reduces to
\begin{align}
\tilde{r}^2_{\epsilon}&\leq\frac{n V_n^{\frac{2}{n}}G(\mathcal{S})-\frac{n}{n+2}(1-\epsilon)^{1+\frac{2}{n}}}{\epsilon} r_{\text{eff}}^2\nonumber\\
&\leq\frac{2\pi e G(\mathcal{S})-\frac{n}{n+2}(1-\epsilon)^{1+\frac{2}{n}}}{\epsilon} r_{\text{eff}}^2\nonumber,
\end{align}
as desired.
\end{proof}

\vspace{1mm}

Using Lemma~\ref{lem:effrad} we can prove the following.

\vspace{1mm}

\begin{lemma}
\label{lem:dither}
Let $\Lambda^{(n)}$ be a sequence of lattices that is good for MSE quantization. Then the sequence of random dither vectors \mbox{$\bU^{(n)}\sim\Unif(\CV^{(n)})$} is semi norm-ergodic.
\end{lemma}

\vspace{1mm}

\begin{proof}
We need to show that for any $\epsilon,\delta>0$ and $n$ large enough
\begin{align}
\Pr\left(\bU^{(n)}\notin\m{B}(0,\sqrt{(1+\delta)n\sigma^2\left(\Lambda^{(n)}\right)}\right)\leq \epsilon.\nonumber
\end{align}
By Lemma~\ref{lem:effrad}, it suffices to show that
\begin{align}
&\sqrt{\frac{2\pi e G\left(\Lambda^{(n)}\right)-\frac{n}{n+2}(1-\epsilon)^{1+\frac{2}{n}}}{\epsilon}} r_{\text{eff}}\left(\Lambda{(n)}\right)\nonumber\\
& \ \ \ \ \ \ \ \ \ \ \ \leq \sqrt{(1+\delta)n\sigma^2\left(\Lambda^{(n)}\right)}.\label{rineq}
\end{align}
From~\eqref{reffbound}, we have
\begin{align}
r_{\text{eff}}\left(\Lambda^{(n)}\right)\leq\sqrt{(n+2)\sigma^2\left(\Lambda^{(n)}\right)}.\label{reffbound2}
\end{align}
and the LHS of~\eqref{rineq} can be therefore upper bounded by
\begin{align}
\sqrt{n\sigma^2\left(\Lambda^{(n)}\right)}\sqrt{\frac{\frac{n+2}{n}2\pi e G\left(\Lambda^{(n)}\right)-(1-\epsilon)^{1+\frac{2}{n}}}{\epsilon}}\label{sqrtineq1}
\end{align}
The sequence of lattices $\Lambda^{(n)}$ is good for MSE quantization, and therefore for any $\delta_1>0$ and $n$ large enough
\begin{align}
G(\Lambda^{(n)})<(1+\delta_1)\frac{1}{2\pi e}.\nonumber
\end{align}
Setting $\delta_1=\delta\epsilon/3$, we have that for $n$ large enough
\begin{align}
\frac{n+2}{n}&2\pi e G\left(\Lambda^{(n)}\right)-(1-\epsilon)^{1+\frac{2}{n}}\nonumber\nonumber\\
&\leq \frac{n+2}{n}\left(1+\frac{\delta\epsilon}{3}\right)-(1-\epsilon)^{1+\frac{2}{n}}\nonumber\\
&\leq \epsilon+\delta\epsilon,\label{sqrtineq2}
\end{align}
where the last inequality follows since for $n$ large enough $\tfrac{n+2}{n}(1+\tfrac{\delta\epsilon}{3})<1+\tfrac{2\delta\epsilon}{3}$ and $(1-\epsilon)^{1+\frac{2}{n}}>1-\epsilon-\tfrac{\delta\epsilon}{3}$.
Combining~\eqref{sqrtineq1} and~\eqref{sqrtineq2} establishes~\eqref{rineq}.
\end{proof}

\vspace{1mm}

We are now ready to prove Theorem~\ref{thm:effball}.

\vspace{1mm}

\begin{proof}[Proof of Theorem~\ref{thm:effball}]
Since $\bN$ and $\bU$ are statistically independent, the effective variance of $\bZ$ is
\begin{align}
\sigma^2_{\bZ}=\frac{1}{n}\mathbb{E}\|\bZ\|^2=\alpha^2\sigma_{\bN}^2+\beta^2\sigma^2_{\bU}.\nonumber
\end{align}
We have to prove that
for any $\epsilon>0$, $\delta>0$ and $n$ large enough
\begin{align}
\Pr&\left(\bZ\notin\mathcal{B}(\mathbf{0},\sqrt{(1+\delta)n\sigma^2_{\bZ}}\right)<\epsilon.\nonumber
\end{align}
For any $\epsilon>0$, $\delta>0$ and $n$ large enough we have
\begin{align}
&\Pr\left(\bZ\notin\mathcal{B}(\mathbf{0},\sqrt{(1+\delta)n\sigma^2_{\bZ}}) \right)\nonumber\\
&=\Pr\left(\|\bZ\|^2>(1+\delta)n\sigma^2_{\bZ}\right)\nonumber\\
&=\Pr\left(\|\bN\|^2>(1+\delta)n\sigma^2_{\bN}\right)\nonumber\\
& \ \ \ \cdot\Pr\left(\|\bZ\|^2>(1+\delta)n\sigma^2_{\bZ} \ \big| \ \|\bN\|^2>(1+\delta)n\sigma^2_{\bN} \right)\nonumber\\
&+\Pr\left(\|\bN\|^2\leq(1+\delta)n\sigma^2_{\bN}\right)\nonumber\\
& \ \ \ \cdot\Pr\left(\|\bZ\|^2>(1+\delta)n\sigma^2_{\bZ} \ \big| \ \|\bN\|^2\leq(1+\delta)n\sigma^2_{\bN} \right)\nonumber\\
&\leq \frac{\epsilon}{3}+\Pr\bigg(\beta^2\|\bU\|^2+2\alpha\beta\bN^T\bU\nonumber\\
& \ \ \ \ \ \ \ \ \ \ \ \ \ \ \ \ \ >(1+\delta)n\beta^2\sigma_{\bU}^2 \ \big| \ \|\bN\|^2\leq(1+\delta)n\sigma^2_{\bN} \bigg)\label{useprop}\\
&\leq \frac{\epsilon}{3}+\Pr\left(\beta^2\|\bU\|^2>n\beta^2\sigma_{\bU}^2(1+\delta/2) \right)\nonumber\\
&\ \ \ \ \ +\Pr\left(2\alpha\beta\bN^T\bU>n\beta^2\sigma_{\bU}^2\delta/2 \ \big| \ \|\bN\|^2\leq(1+\delta)n\sigma^2_{\bN} \right)\label{unionbound}\\
&\leq \frac{2\epsilon}{3}+\Pr\left(2\alpha\beta\bN^T\bU>n\beta^2\sigma_{\bU}^2\delta/2 \ \big| \ \|\bN\|^2\leq(1+\delta)n\sigma^2_{\bN} \right),\label{peboundtmp}
\end{align}
where~\eqref{useprop} follows from the fact that $\bN$ is semi norm-ergodic,~\eqref{unionbound} from the union bound and~\eqref{peboundtmp} from the fact that $\bU$ is semi norm-ergodic due to Lemma~\ref{lem:dither}. We are left with the task of showing that the last probability in~\eqref{peboundtmp} can be made smaller than $\epsilon/3$ for $n$ large enough. This requires some more work.

Since $\bU$ is semi norm-ergodic noise, than for any $\epsilon_2>0$, $\delta_2>0$ and $n$ large enough
\begin{align}
\Pr\left(\|\bU\|>\sqrt{(1+\delta_2)n\sigma_{\bU}^2} \right)<\epsilon_2.\nonumber
\end{align}
Let $r_{\bU}=\sqrt{(1+\delta_2)n\sigma_{\bU}^2}$, and $f_{\bU}(\bu)$ be the probability density function (pdf) of $\bU$. For any $r>0$ we have
\begin{align}
\Pr&\left(\bN^T\bU>r \ \big| \bN=\bn\right)=\int_{|\bu|\leq r_{\bu}}f_{\bU}(\bu)\Ind(\bn^T\bu>r)d\bu\nonumber\\
&\ \ \ \ \ \ \ \ \ \ \ \ \ \ \ \ \ \ \ \ \ \ \ \ \ \ +\int_{|\bu|> r_{\bu}}f_{\bU}(\bu)\Ind(\bn^T\bu>r)d\bu\nonumber\\
&\leq \int_{|\bu|\leq r_{\bu}}\frac{1}{V(\Lambda)}\Ind(\bn^T\bu>r)d\bu+\epsilon_2\nonumber\\
&=\frac{V(\mathcal{B}(\mathbf{0},r_{\bU}))}{V(\Lambda)}\int_{|\bu|\leq r_{\bu}}\frac{1}{V(\mathcal{B}(\mathbf{0},r_{\bU}))}\Ind(\bn^T\bu>r)d\bu+\epsilon_2.\nonumber
\end{align}
Using the fact that $\Lambda$ is good for MSE quantization we have $V(\Lambda)^{\frac{2}{n}}\rightarrow 2\pi e \sigma^2_{\bU}$, and hence, for $n$ large enough,
\begin{align}
\left(\frac{V(\mathcal{B}(\mathbf{0},r_{\bU}))}{V(\Lambda)}\right)^{\frac{2}{n}}<(1+2\delta_2).\nonumber
\end{align}
Let $\tilde{\bU}$ be a random vector uniformly distributed over $\mathcal{B}(\mathbf{0},r_{\bU})$. We have
\begin{align}
\Pr&\left(\bN^T\bU>r \ \big| \bN=\bn\right)
<\epsilon_2+(1+2\delta_2)^{\frac{n}{2}}\Pr(\bn^T\tilde{\bU}>r).\label{inprod}
\end{align}
Let $\tilde{\bZ}$ be AWGN with zero mean and variance $r^2_{\bU}/n$. Using a similar approach to that taken in~\cite[Lemma 11]{ErezZamirAWGN}, we would now like to upper bound the pdf of $\tilde{\bU}$ using that of $\tilde{\bZ}$. For any $\bx\in\RR^n$ we have
\begin{align}
\frac{f_{\tilde{\bU}}(\bx)}{f_{\tilde{\bZ}}(\bx)}=\frac{f_{\tilde{\bU}}(\|\bx\|)}{f_{\tilde{\bZ}}(\|\bx\|)}\leq\frac{f_{\tilde{\bU}}(r_{\bU})}{f_{\tilde{\bZ}}(r_{\bU})}=\left(\frac{2\pi e}{nV_n^{\frac{2}{n}}} \right)^{\frac{n}{2}}.\nonumber
\end{align}
Thus, for any $\bx\in\RR^n$
\begin{align}
f_{\tilde{\bU}}(\bx)\leq 2^{\frac{n}{2}\log\left(\frac{2\pi e}{n} V_n^{-\frac{2}{n}}\right)}f_{\tilde{\bZ}}(\bx)\nonumber.
\end{align}
We can further bound~\eqref{inprod} for large enough $n$ as
\begin{align}
\Pr&\left(\bN^T\bU>r \ \big| \bN=\bn\right)\nonumber\\
&\leq\epsilon_2+2^{\frac{n}{2}\log\left((1+2\delta_2)\frac{2\pi e}{n} V_n^{-\frac{2}{n}}\right)}\Pr(\bn^T\tilde{\bZ}>r)\nonumber\\
&=\epsilon_2+2^{\frac{n}{2}\log\left((1+2\delta_2)\frac{2\pi e}{n} V_n^{-\frac{2}{n}}\right)}Q\left(\frac{\sqrt{n}r}{\|\bn\|r_{\bU}}\right),\nonumber
\end{align}
where $Q(\cdot)$ is the standard $Q$-function, which satisfies \mbox{$Q(x)<e^{-x^2/2}$}. It follows that
\begin{align}
&\Pr\left(2\alpha\beta\bN^T\bU>n\beta^2\sigma_{\bU}^2\delta/2 \ \big| \ \|\bN\|^2\leq(1+\delta)n\sigma^2_{\bN} \right)\nonumber\\
&\leq \epsilon_2\nonumber\\
&+2^{\frac{n}{2}\log\left((1+2\delta_2)\frac{2\pi e}{n} V_n^{-\frac{2}{n}}\right)}Q\left(\frac{\sqrt{n}\beta\sigma_{\bU}\delta/2}{2\alpha\sigma_{\bN}\sqrt{(1+\delta)(1+2\delta_2)}}\right).\nonumber
\end{align}
Taking $\delta_2$ sufficiently smaller than $\delta$ and $\epsilon_2<\epsilon/6$, for $n$ large enough we have
\begin{align}
\Pr\left(2\alpha\beta\bN^T\bU>n\beta^2\sigma_{\bU}^2\delta/2 \ \big| \ \|\bN\|^2\leq(1+\delta)n\sigma^2_{\bN} \right)<\frac{\epsilon}{3}.\nonumber
\end{align}
\end{proof}

We end this section with two simple corollaries of Theorem~\ref{thm:effball}. The first follows since any i.i.d. noise is semi norm-ergodic, and the second follows by iterating over Theorem~\ref{thm:effball}.

\begin{corollary}
\label{cor:mixturenoise}
Let $\bZ=\alpha\bN+\beta\bU$, where $\alpha,\beta\in\RR$, $\bN$ is an i.i.d. noise vector, and $\bU$ is a dither statistically independent of $\bN$, uniformly distributed over the Voronoi region $\CV$ of a lattice $\Lambda$ that is good for MSE quantization. Then, the random vector $\bZ$ is semi norm-ergodic.\end{corollary}
\vspace{1mm}

\begin{corollary}
\label{cor:cofmixturenoise}
Let \mbox{$\bU_1,\cdots,\bU_K$} be statistically independent dither random vectors, each uniformly distributed over the Voronoi region $\CV_k$ of $\Lambda_k$, $k=1,\ldots,K$, that are all good for MSE quantization. Let $\bN$ be a semi norm-ergodic random vector statistically independent of \mbox{$\left\{\bU_1,\cdots,\bU_K\right\}$}. For any \mbox{$\alpha,\beta_1,\cdots,\beta_K\in\RR$} the random vector \emph{$\bZ=\alpha\bN+\sum_{k=1}^K\beta_k\bU_k$} is semi norm-ergodic.
\end{corollary}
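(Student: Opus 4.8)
The plan is to prove the statement by induction on the number $K$ of dither terms, using Lemma~\ref{lem:effball} as the inductive engine.

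For the base case $K=1$, the vector $\bZ=\alpha\bN+\beta_1\bU_1$ is a linear combination of the i.i.d.\ (hence semi norm-ergodic) noise $\bN$ and a single dither $\bU_1$ uniform over the Voronoi region of $\Lambda_1$, which is good for MSE quantization. This is precisely the setting covered by Corollary~\ref{cor:mixturenoise} (equivalently, Lemma~\ref{lem:effball} applied with $\bN$ playing the role of the semi norm-ergodic noise), so $\bZ$ is semi norm-ergodic.

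For the inductive step, assume the claim holds for $K-1$ dithers and consider $\bZ=\alpha\bN+\sum_{k=1}^{K}\beta_k\bU_k$. Define $\bN'\triangleq\alpha\bN+\sum_{k=1}^{K-1}\beta_k\bU_k$, so that $\bZ=\bN'+\beta_K\bU_K$. By the induction hypothesis applied to the i.i.d.\ noise $\bN$ and the dithers $\bU_1,\ldots,\bU_{K-1}$, the vector $\bN'$ is semi norm-ergodic. Moreover, since $\bU_K$ is statistically independent of the entire collection $\{\bN,\bU_1,\ldots,\bU_{K-1}\}$, it is in particular independent of $\bN'$. Applying Lemma~\ref{lem:effball} with the semi norm-ergodic noise $\bN'$ in place of $\bN$, with coefficient $\alpha=1$ on $\bN'$, with $\beta=\beta_K$, and with the dither $\bU_K$ uniform over the Voronoi region of the MSE-good lattice $\Lambda_K$, we conclude that $\bZ=\bN'+\beta_K\bU_K$ is semi norm-ergodic, completing the induction.

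The argument is entirely routine; the only point requiring a small amount of care is that in the inductive step $\bN'$ is no longer i.i.d., so one must invoke Lemma~\ref{lem:effball} directly---whose hypothesis asks only that the noise term be semi norm-ergodic and statistically independent of the dither---rather than Corollary~\ref{cor:mixturenoise}, which is stated for i.i.d.\ noise. The other thing to keep straight is the mutual-independence bookkeeping, namely that $\bU_K$ is independent of $\bN'$, but this follows immediately from the stated mutual independence of $\{\bN,\bU_1,\ldots,\bU_K\}$.
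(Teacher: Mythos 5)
Your induction on $K$, with Lemma~\ref{lem:effball} applied at each step to the (semi norm-ergodic, but no longer i.i.d.) partial sum $\bN'$, is correct and is exactly what the paper means when it says the corollary ``follows by iterating over Lemma~\ref{lem:effball}.'' You have simply made the iteration explicit, including the one point that matters --- that Lemma~\ref{lem:effball} only requires the noise term to be semi norm-ergodic and independent of the dither --- so there is nothing to add.
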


\section{Nested Lattice Codes with a Cubic Coarse Lattice}
\label{sec:noshaping}

In this section we prove Theorem~\ref{thm:cubic}. As before, we consider an ensemble of $p$-ary random Construction A lattices. More precisely, we draw a matrix $\bG\in\ZZ_p^{k\times n}$ with i.i.d. entries uniformly distributed over $\ZZ_p$, and construct the (random) lattice $\gamma\Lambda(\bG)$ as in Definition~\ref{def:constA}, with $\gamma=\sqrt{12\Tsnr}$. We take $\gamma\Lambda(\bG)$ as a fine lattice and $\gamma\ZZ^n\subset\gamma\Lambda(\bG)$ as a coarse lattice, to construct the nested lattice codebook $\m{L}=\gamma\Lambda(\bG)\cap\gamma\cube$. Clearly, $\sigma^2\left(\gamma\ZZ^n\right)=\Tsnr$ and the rate of all codebooks in the ensemble is $R=\tfrac{k}{n}\log{p}$.

Applying the mod-$\Lambda$ scheme with the codebook $\m{L}$, as described in the proof of Theorem~\ref{thm:lapidoth}, gives rise to the effective channel~\eqref{Yeff}, where $\bZe$ is as defined in~\eqref{effNoise}. Note that for the coarse lattice $\gamma\ZZ^n$ which is used, the random vector $\bX$ is i.i.d. with each component uniformly distributed over $[-\gamma/2,\gamma/2)$. Thus, $\bZe$ is i.i.d., and in particular semi norm-ergodic, with variance $\sigma^2_{\bZe}(\alpha)=\alpha^2+(1-\alpha)^2\Tsnr$. As in the proof of Theorem~\ref{thm:lapidoth}, we choose $\alpha=\Tsnr/(1+\Tsnr)$ such as to minimize $\sigma^2_{\bZe}(\alpha)$, which gives $\sigma^2_{\bZe}=\Tsnr/(1+\Tsnr)$. As in~\eqref{codetNN}, the decoder finds
\begin{align}
\hat{\bt}=\left[Q_{\gamma\Lambda(\bG)}(\bYe)\right]\Mod_c=\left[Q_{\gamma\Lambda(\bG)}(\bt+\bZe)\right]\bmod \gamma\ZZ^n,\nonumber
\end{align}
and outputs the message corresponding to $\hat{\bt}$. In order to complete the proof we will need the following lemma.

\begin{lemma}
Let $n$ be a natural number, $p$ a prime number, $R>0$, $k=nR\log{p}$ and $\gamma>0$. Let $\bG\in\ZZ_p^{k\times n}$ be a random matrix with i.i.d. entries uniformly distributed over $\ZZ_p$, and $\Lambda(\bG)$ be constructed as in Definition~\ref{def:constA}. Let $\bZ$ be an additive semi norm-ergodic noise with effective variance $\sigma_\bZ^2=\frac{1}{n}\mathbb{E}\|\bZ\|^2$ and define $\Gamma(p,\gamma^2/\sigma^2_{\bZ})\triangleq\log\left(1+\sqrt{\frac{\gamma^2}{4p^2 \sigma^2_\bZ}} \right)$. For any $\epsilon,\delta>0$ and $n$ large enough, if $R<\tfrac{1}{2}\log\left(\frac{\gamma^2}{(1+\delta)2\pi e\sigma_{\bZ}^2}\right)-\Gamma(p,\gamma^2/\sigma^2_{\bZ})$, then
    \begin{align}
    \Pr\left(\Pr\left(Q_{\gamma\Lambda(\bG)}(\bt+\bZ)\neq \bt \bmod \gamma\ZZ^n\mid \bG\right)>\delta \right)<\epsilon.\label{cosenNNPe}
    \end{align}
for any $\bt\in\gamma\Lambda(\bG)$.
\label{lem:cosetNN}
\end{lemma}

Note that in~\eqref{cosenNNPe}, the error probability in \emph{coset nearest neighbor decoding} is required to be smaller than $\delta$. In other words, the decoder is only required to find the correct coset $\gamma\Lambda(\bG)/\gamma\ZZ^n$ to which $\bt$ belongs, and not the exact point $\bt$ that was transmitted. See Figure~\ref{fig:cosetdecoder} for an illustration of coset nearest neighbor decoding.

Theorem~\ref{thm:cubic} now follows by applying Lemma~\ref{lem:cosetNN} with $\gamma=\sqrt{12\Tsnr}$, $\sigma^2=\Tsnr/(1+\Tsnr)$ and taking $\delta$ to zero. This shows that for every $\delta>0$, for almost every $\bG$ and $n$ large enough, the error probability of the mod-$\Lambda$ scheme with codebook $\m{L}=\gamma\Lambda(\bG)\cap\sqrt{12\Tsnr}\cdot\cube$ is smaller than $\delta$. In particular, there exists a sequence of such codebooks with vanishing error probability.

It now only remains to prove Lemma~\ref{lem:cosetNN}.

\begin{proof}[Proof of Lemma~\ref{lem:cosetNN}]
The proof is similar to that of Theorem~\ref{thm:randomConstA}, part~\ref{goodForCoding}, with a few differences we now specify.

We upper bound the error probability of the coset nearest neighbor decoder with that of a bounded distance coset decoder. The latter finds all points of $\gamma\Lambda(\bG)$ in a ball of radius $r$ around the output $\bt+\bZe$ and outputs the list of all these points reduced modulo $\gamma\ZZ^n$. If the list of cosets does not contain exactly one point, an error is declared. It can be verified that an error event $E$ of this decoder is the union of $E_1$ and $E_3$, defined in Section~\ref{subsec:NN}. The event $E_2$ that was defined there, corresponds to decoding a point different than $\bt$ inside the same coset as $\bt$. This event does not incur an error for coset nearest neighbor decoding. Thus, equation~\eqref{PeG} continues to hold here.

We take the decoding radius as $r^2=n(1+\delta)\sigma^2_{\bZ}$, such that by the semi norm-ergodicity of $\bZe$, it follows that for $n$ large enough $\Pr(E_1)<\delta/2$. In order to upper bound $\Pr(\Pr(E_3\mid\bG))$ we upper bound $\mathbb{E}_{\bG}(\Pr(E_3\mid\bG))$ and then apply Markov's inequality. By~\eqref{intersectionbound3} we have
\begin{align}
\mathbb{E}_{\bG}&(\Pr(E_3\mid\bG))\leq p^k\gamma^{-n}V_n r^n\left(1+\frac{\gamma\sqrt{n}}{2p \ r}\right)^n\nonumber\\
&=2^{n\left(R+\frac{1}{2}\log\left(V_n^{\frac{2}{n}}\frac{r^2}{\gamma^2}\right)+\log\left(1+\sqrt{\frac{n}{4p^2 }\frac{\gamma^2}{r^2}} \right) \right)}\nonumber\\
&\leq 2^{n\left(R+\frac{1}{2}\log\left(\frac{2\pi e}{n}\frac{n(1+\delta)\sigma^2_\bZ}{\gamma^2}\right)+\log\left(1+\sqrt{\frac{n}{4p^2}\frac{\gamma^2}{ n(1+\delta)\sigma^2_\bZ}} \right)\right)}\nonumber\\
&= 2^{-n\left(\frac{1}{2}\log\left(\frac{\gamma^2}{2\pi e (1+\delta)\sigma^2_\bZ}\right)-\log\left(1+\sqrt{\frac{\gamma^2}{4p^2 \sigma^2_\bZ}} \right)-R\right)},\nonumber
\end{align}
Thus, for any $R<\frac{1}{2}\log\left(\frac{\gamma^2}{2\pi e (1+\delta)\sigma^2_\bZ}\right)-\Gamma(p,\gamma^2/\sigma^2)$, we have that $\mathbb{E}_\bG\left(\Pr(E_3|\bG)<\delta\epsilon/2\right)$, for $n$ large enough. Applying Markov's inequality gives that $\Pr\left(\Pr(E_3|\bG)>\delta/2 \right)<\epsilon$ as desired.
\end{proof}

\section*{Acknowledgment}
The authors thank Bobak Nazer, Yair Yona and Ram Zamir for discussions that helped prompt this work.
\bibliographystyle{IEEEtran}
\bibliography{OrBib2}

\end{document}